\documentclass[11pt]{article}

\usepackage[final]{acl}

\usepackage{times}
\usepackage{latexsym}
\usepackage{booktabs}
\usepackage{amsmath}
\usepackage{amssymb}
\usepackage{amsthm}
\usepackage{booktabs} 
\usepackage{multirow}  
\usepackage{graphicx}  
\usepackage{enumitem}
\usepackage{algorithm}
\usepackage{pifont}
\usepackage{colortbl}
\usepackage{xcolor}
\usepackage{subcaption}
\usepackage{fontawesome5}

\usepackage{algpseudocode}

\usepackage[T1]{fontenc}

\usepackage[utf8]{inputenc}

\usepackage{microtype}

\usepackage{inconsolata}

\usepackage{graphicx}

\definecolor{rowgray}{RGB}{240,240,240}
\definecolor{rowblue}{RGB}{219,234,254}
\definecolor{rowgreen}{RGB}{220,252,231}

\DeclareMathOperator*{\argmax}{arg\,max}

\newtheorem{definition}{Definition}
\newtheorem{theorem}{Theorem}

\newtheorem{assumption}{Assumption}
\newtheorem{proposition}{Proposition}

\title{
Let Them Steal: Trapping Large Language Model Extraction Attacks \\
with Knowledge Honeypot}

\author{
Yuyang Dai \and Yushun Dong \\
Florida State University \\
\texttt{y9657422@gmail.com} \quad
\texttt{yd24f@fsu.edu} \\
{\small \href{https://github.com/LabRAI/knowledge-honeypot}
{\faGithub\ \texttt{LabRAI/knowledge-honeypot}}}
}

\begin{document}
\maketitle
\begin{abstract}
Large language models deployed as commercial APIs are vulnerable to model extraction attacks, while existing defenses either act too late or degrade utility for legitimate users.
We propose \textbf{Knowledge Trap}, a defense that redirects extraction attacks toward low-transferability knowledge through a \emph{Honeypot Knowledge Graph} (HKG) and breadcrumb-guided exploration.
Instead of blocking queries or perturbing outputs, Knowledge Trap consumes the attacker's limited query budget on knowledge with negligible downstream utility while preserving benign-user performance.
Experiments in medical, financial, and legal domains show that Knowledge Trap reduces surrogate Agreement by 6.2\% on average without degrading legitimate-user accuracy, outperforming existing defenses that impose measurable user impact.
These results suggest that defending knowledge-space traversal is a practical direction for mitigating LLM extraction attacks.
\end{abstract}

\vspace{-0.4cm}
\section{Introduction}

Large language models (LLMs) are increasingly deployed as closed-source APIs for high-stakes applications such as clinical decision support and financial analysis
\cite{achiam2023gpt4,brown2020language,singhal2023large,hurst2024gpt}.
This deployment model is vulnerable to \emph{model extraction attacks} (MEAs), where an adversary can repeatedly query the API, collect input--output pairs, and distill a high-fidelity surrogate model at a fraction of the original training cost
\cite{tramer2016stealing,papernot2017practical,krishna2019thieves,xu2022student,birch2023model,carlini2024stealing,zhao2025survey}.
Defending against MEAs is therefore an urgent practical challenge.

Effective defense is difficult because it must simultaneously interfere with extraction while preserving response quality for legitimate users.
Passive defenses such as query detection
\cite{juuti2019prada,zhang2021seat}
and watermarking
\cite{jia2021entangled,zhao2022drw,peng2023you,zhao2024nsmark,he2022cater,szyller2021dawn}
preserve benign utility but intervene only after extraction has already succeeded.
Active defenses such as output perturbation
\cite{orekondy2019prediction,kariyappa2020defending},
proof-of-work schemes
\cite{dziedzic2022increasing},
and MISLEADER
\cite{cheng2025misleader}
interfere during extraction, but typically degrade utility for legitimate users.
Recent deception-based approaches such as HoneypotNet
\cite{wang2025honeypotnet}
attempt to sidestep this trade-off through output- or parameter-level traps, yet still leave the attacker's knowledge acquisition process largely undefended.

Crucially, all existing defenses overlook a fundamental asymmetry in the attacker's position: \textbf{query budget is the attacker's primary constraint.}
Modern MEAs rely heavily on active learning to maximize information gain per query under limited API budgets
\cite{correia2018copycat,juuti2019prada,pal2020activethief,
jagielski2020high,chandrasekaran2020exploring,dai2023meaeq},
yet existing defenses either react after extraction or degrade benign-user utility.
Motivated by this gap, we propose \textbf{Knowledge Trap}, a defense that redirects extraction attacks toward low-transferability knowledge through a \emph{Honeypot Knowledge Graph} (HKG) and breadcrumb-guided exploration.
Once inside the HKG, the attacker continues consuming budget on knowledge with negligible downstream utility while legitimate users remain unaffected.

Our contributions are as follows:

\ding{112} We introduce \textbf{\textit{Knowledge Trap}}, that redirects model extraction attacks toward low-transferability knowledge instead of passively detecting or blocking them.

\ding{112} We construct a Honeypot Knowledge Graph (HKG) together with a breadcrumb-based steering mechanism that traps active-learning-based extraction attacks in non-transferable knowledge regions.

\ding{112} We provide theoretical and empirical analysis showing that Knowledge Trap redirects budget-constrained extraction attacks toward low-transferability knowledge while leaving benign users largely unaffected.
\begin{figure*}[t]
\centering\includegraphics[width=\linewidth]{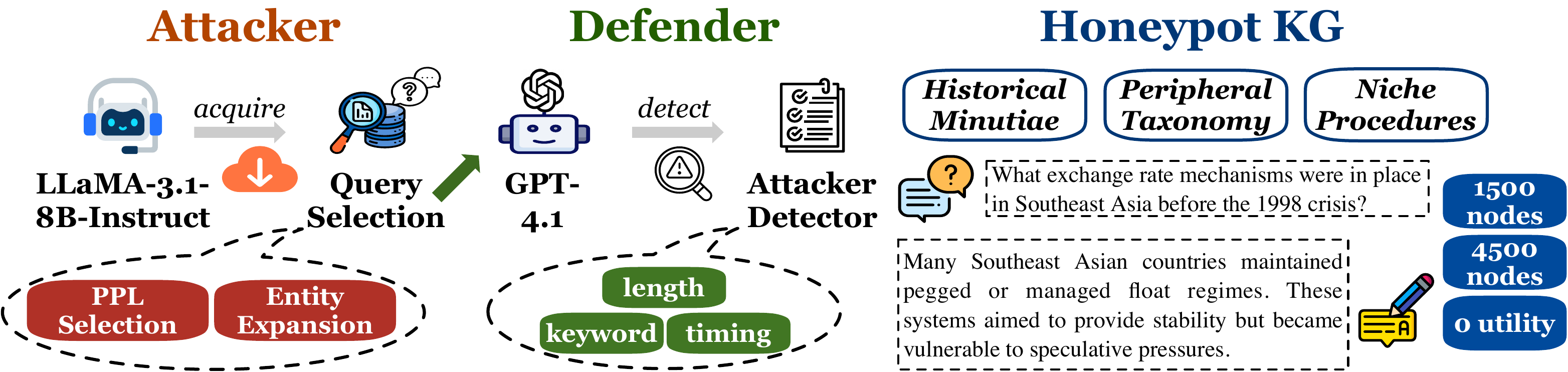}
\caption{Overview of the Knowledge Trap defense pipeline.}
\label{fig:overview}
\end{figure*}

\section{Problem Formulation}
\label{sec:formulation}

\subsection{Preliminaries}
\label{sec:prelim}

A large language model acquires a broad body of knowledge $\mathcal{K}$
during pre-training and fine-tuning, but not all knowledge
contributes equally to a downstream application. Let
$f_v: \mathcal{X} \rightarrow \mathcal{Y}$ denote the victim model,
a black-box API, and let $\mathcal{T}$ denote the set of downstream
tasks for which $f_v$ is deployed. We partition $\mathcal{K}$ into
two disjoint regions.

\begin{definition}[Task-Critical Knowledge]
\label{def:critical}
A subset $\mathcal{K}_c \subseteq \mathcal{K}$ is \emph{task-critical}
with respect to $\mathcal{T}$ if training a model on QA pairs drawn
from $\mathcal{K}_c$ yields significant improvement on tasks in
$\mathcal{T}$ compared to an uninformed base model.
\end{definition}

\begin{definition}[Low-Transferability Knowledge]
\label{def:tzk}
A subset $\mathcal{K}_0 \subseteq \mathcal{K}$ is
\emph{low-transferability} with respect to $\mathcal{T}$ if training
a model on QA pairs drawn from $\mathcal{K}_0$ yields negligible or
no improvement on tasks in $\mathcal{T}$ compared to an uninformed
base model.
\end{definition}

The distinction is task-dependent (e.g., grammar knowledge is
low-transferability for medical QA but critical for grammar
correction). This taxonomy reframes model extraction as
\emph{selective traversal} through the model's knowledge space,
where different regions contribute unequally to a surrogate's
downstream utility, a property we exploit directly in
Section~\ref{sec:method}.

\subsection{The Knowledge Extraction Game}
\label{sec:game}

We formalize model extraction and its defense as a game between a
\emph{challenger} $\mathcal{C}$, who operates the (possibly defended)
victim model, and an \emph{adversary} $\mathcal{A}$, who seeks to
build a surrogate $f_a$ approximating $f_v$ on $\mathcal{T}$. This
mirrors the security-game formulations used to define extraction and
membership-inference adversaries in prior work
\cite{tramer2016stealing,chandrasekaran2020exploring,carlini2022membership},
and lets us state the defender's guarantee (Section~\ref{sec:theory})
as a bound on $\mathcal{A}$'s advantage in this game.

\begin{definition}[Knowledge Extraction Game $\mathsf{Extract}_{\mathcal{A}}(\lambda)$]
\label{def:game}
Let $\lambda$ be a security parameter fixing the query budget
$B = B(\lambda)$. The game proceeds in three phases between
$\mathcal{C}$ and $\mathcal{A}$:

\begin{enumerate}
    \item \textbf{Setup.} $\mathcal{C}$ instantiates $f_v$ with a
    fixed knowledge partition (Definitions~\ref{def:critical}--\ref{def:tzk})
    and, if a defense is deployed, constructs $(\mathcal{H}, \pi)$
    from $\mathcal{K}_0$; only the query API of $f_v$ is exposed to
    $\mathcal{A}$, who knows $\mathcal{T}$ and $B$ but nothing about
    the partition or the defense.

    \item \textbf{Query phase.} For $t = 1, \dots, B$, $\mathcal{A}$
    adaptively selects $x_t = \argmax_{x \in \mathcal{C}_t} \alpha\big(x;\, f_a^{(t-1)}\big)$
    \label{eq:acquisition}, where $\alpha$ is an acquisition function
    prioritizing queries of maximal surrogate uncertainty
    \cite{li2026query,pal2020activethief}, and receives $\mathcal{C}$'s
    response (either $f_v(x_t)$ directly, or a response mediated by
    $(\mathcal{H}, \pi)$, cf.\ Algorithm~\ref{alg:defense}).

    \item \textbf{Evaluation.} $\mathcal{C}$ evaluates $f_a^{(B)}$
    against $f_v$ on a held-out distribution over $\mathcal{T}$.
\end{enumerate}
\end{definition}

\begin{definition}[Adversary's Advantage]
\label{def:advantage}
The advantage of $\mathcal{A}$ in $\mathsf{Extract}_{\mathcal{A}}(\lambda)$ is
$\mathrm{Adv}(\mathcal{A}) = \mathrm{Utility}\big(f_a^{(B)}, \mathcal{T}\big) - \mathrm{Utility}\big(f_a^{(0)}, \mathcal{T}\big)$.
$\mathcal{A}$ \emph{wins} if $\mathrm{Adv}(\mathcal{A}) > \gamma$ for a
task-specific significance threshold $\gamma$; a defense is
successful if it drives $\mathrm{Adv}(\mathcal{A})$ toward $0$ within
budget $B$, i.e., forces most of $\mathcal{A}$'s query budget to be
spent on $\mathcal{K}_0$
(Definitions~\ref{def:critical}--\ref{def:tzk}).
\end{definition}

This game makes explicit what prior formalizations of extraction
leave implicit \cite{tramer2016stealing}: the attacker's information
gain is bottlenecked by $B$, and only queries that resolve
uncertainty over $\mathcal{K}_c$ actually move
$\mathrm{Adv}(\mathcal{A})$. A defender who cannot block or throttle
queries (Section~\ref{sec:threatmodel}) can still win the game by
controlling \emph{which} region of $\mathcal{K}$ each query resolves
uncertainty over.

\subsection{Threat Model}
\label{sec:threatmodel}

We instantiate $\mathsf{Extract}_{\mathcal{A}}(\lambda)$ with the
capabilities summarized in Table~\ref{tab:defender_model}. In brief,
$\mathcal{A}$ has no access to $f_v$'s weights, architecture, or
training data and interacts with $f_v$ only through black-box
input--output queries (Definition~\ref{def:game}, Setup); $\mathcal{A}$
follows the acquisition strategy described above, is response-adaptive
by default (its candidate set $\mathcal{C}_t$ is regenerated from
entities in $f_v$'s previous responses; see \S\ref{sec:rq6} for other
query-generation mechanisms), and is budget-constrained but otherwise
computationally unbounded. $\mathcal{C}$, in turn, observes only
API-level information (query text, length, timestamps, session/API-key
identity), has no access to $\mathcal{A}$'s surrogate model, training
process, gradients, or intent labels, and is restricted to two actions
per query: appending a breadcrumb to the response, or serving an
HKG-node response directly. $(\mathcal{H}, \pi)$ are fixed at Setup
with no per-query manual intervention, and $\mathcal{C}$ does
\emph{not} block, rate-limit, or refuse queries, a restriction
deliberately imposed to preserve benign-user utility. This threat
model is intentionally weaker than active defenses that assume
blocking/throttling capability, since Knowledge Trap relies solely on
\emph{redirection}.

\subsection{Problem Statement}
\label{sec:problemstatement}

Instantiating $\mathsf{Extract}_{\mathcal{A}}(\lambda)$ with the
threat model of \S\ref{sec:threatmodel}, the defender's task reduces
to choosing $(\mathcal{H}, \pi)$ at Setup to bound $\mathcal{A}$'s
advantage. Concretely, $\mathcal{C}$ constructs a \emph{Honeypot
Knowledge Graph} $\mathcal{H} = (\mathcal{V}_H, \mathcal{E}_H)$ drawn
from $\mathcal{K}_0$ and a luring policy
$\pi: \mathcal{X} \rightarrow \{\mathrm{normal}, \mathrm{lure}\}$
that maximize the fraction of $\mathcal{A}$'s query-phase budget
allocated toward $\mathcal{H}$,
$\max_{\pi,\, \mathcal{H}} \frac{1}{B}\sum_{t=1}^{B} \mathbf{1}\!\left[v(x_t)\in\mathcal{V}_H\right]$
\label{eq:honeypot_obj},
subject to two constraints inherited directly from
Definition~\ref{def:game}: $\mathcal{A}$ follows the acquisition rule
of \S\ref{sec:game}, and $(\mathcal{H}, \pi)$ do not distort
$\mathcal{C}$'s responses to benign users,
$\Pr_{x \sim \mathcal{P}_{\mathrm{user}}}\!\left[\pi(x)=\mathrm{lure}\right] \leq \epsilon$
\label{eq:harmless},
for negligibly small $\epsilon > 0$. By
Definition~\ref{def:advantage}, a successful defense drives
$\mathrm{Adv}(\mathcal{A}) \to 0$: most of $\mathcal{A}$'s budget is
spent resolving uncertainty over $\mathcal{K}_0$, which we prove formally in
Section~\ref{sec:theory} as a lower bound on the fraction of budget
absorbed by $\mathcal{H}$.
\section{Methodology}
\label{sec:method}

\begin{table}[t]
\centering
\small
\caption{Defender-side threat model (capabilities of $\mathcal{C}$ in
Definition~\ref{def:game}).}
\vspace{-0.2cm}
\label{tab:defender_model}
\begin{tabular}{p{0.32\linewidth} p{0.55\linewidth}}
\toprule
\textbf{Dimension} & \textbf{Defender assumption} \\
\midrule
Observability & API-level only: query text, length, timestamps, session/API-key identity \\
No access to & Surrogate model, attacker pipeline, logits, gradients, intent labels \\
Allowed actions & Append breadcrumbs; return HKG-node responses \\
Disallowed actions & Block, rate-limit, refuse requests, or degrade benign responses \\
Commitment & $\mathcal{H}$ and $\pi$ fixed at Setup \\
Constraint & Benign-user harmlessness (see \S\ref{sec:problemstatement}) \\
\bottomrule
\end{tabular}
\end{table}
 
\subsection{From Knowledge Taxonomy to Honeypot Design}
\label{sec:motivation}
 
If the victim model encodes both task-critical knowledge $\mathcal{K}_c$
and transferability-zero knowledge $\mathcal{K}_0$,
then attacker budget spent extracting $\mathcal{K}_0$
contributes negligibly to downstream surrogate capability.
Our goal is to redirect the attacker's limited query budget toward $\mathcal{K}_0$.
Isolated facts in $\mathcal{K}_0$ are insufficient on their own:
the attacker can quickly exhaust a small set of unrelated concepts and return to productive regions.
Instead, we organize $\mathcal{K}_0$ as a graph, where each concept links to semantically related follow-up targets.
Once the attacker enters the graph, entity-expansion pipelines
\cite{li2026query}
naturally traverse adjacent nodes, creating a self-reinforcing exploration loop that continuously consumes budget without approaching $\mathcal{K}_c$.
We call this structure the \textbf{Honeypot Knowledge Graph} (HKG).
To steer attackers into the HKG, the defender detects extraction-like behavior and injects subtle signals into responses that make HKG nodes appear as attractive exploration targets.
We call these signals \emph{breadcrumbs}.
The complete defense operates in three stages:
(1) construct the HKG from $\mathcal{K}_0$;
(2) detect extraction-like querying behavior;
and
(3) trap the attacker inside the HKG through breadcrumb-guided graph traversal.
Figure~\ref{fig:overview} illustrates the full pipeline.
 
\vspace{-0.2cm}
\subsection{Honeypot Knowledge Graph Construction}
\label{sec:hkg}
 
\paragraph{Design criteria.}
The HKG must remain factually correct, minimally transferable to downstream tasks, attractive to entity-expansion pipelines, and well separated from benign user queries.
 
\vspace{-0.2cm}
\paragraph{Construction pipeline.}
The HKG is built through four explicit stages.
 
\emph{(1) Candidate sourcing.} Candidate concepts are generated using
\texttt{gemini-flash-2.5} (temperature $0.3$), with separate prompts
for the three HKG categories: \emph{historical minutiae} includes
obsolete domain conventions (e.g., humoral pathology);
\emph{peripheral taxonomy} includes domain-adjacent but
non-transferable classification systems; and \emph{niche procedural
records} include technically correct but task-irrelevant workflows.
Separate prompts per category keep generated concepts
domain-adjacent while ensuring they are obsolete or irrelevant.
 
\emph{(2) Keyword-overlap filtering.} A keyword filter removes
candidates with substantial overlap with benchmark terminology in
$\mathcal{T}$, reducing the likelihood that task-critical knowledge
enters the HKG.
 
\emph{(3) Probe-model transferability filtering.} The remaining
candidates are evaluated with a probe model:
\begin{equation}
    \mathrm{Utility}(f_{\mathrm{probe}}^{(k)},\,\mathcal{T})
    -
    \mathrm{Utility}(f_{\mathrm{probe}}^{(0)},\,\mathcal{T})
    \;\leq\; \delta
    \label{eq:filter}
\end{equation}
where $\delta = 0.02$; only concepts whose measured downstream
utility change remains below this threshold are retained.
 
\emph{(4) Graph construction.} Retained concepts are organized into
a directed graph $\mathcal{H}=(\mathcal{V}_H,\mathcal{E}_H)$. Each
node $v_i \in \mathcal{V}_H$ pairs a domain-surfaced question with a
factually grounded answer. Each directed edge $(v_i, v_j) \in
\mathcal{E}_H$ encodes a semantic follow-up relation: the response
for $v_i$ contains the topic name of $v_j$ as an entity likely to
be extracted as a future query target \cite{li2026query}. We
enforce a minimum out-degree of $d_{\min}=3$ per node, and all
edges remain internal to $\mathcal{H}$, preventing traversal back
into task-critical regions; the attacker's exploration frontier
therefore continues expanding within the HKG after entry.
 
Per-domain candidate counts, filtering pass rates, construction
cost, and category composition are reported in
Appendix~\ref{app:cost}.
 
\vspace{-0.2cm}
\paragraph{Evaluation metrics.}
We evaluate each node along three axes:
(1) utility, downstream Agreement change after fine-tuning on samples containing the node;
(2) attractiveness, extraction confidence under the attacker's entity-expansion pipeline;
and
(3) inaccessibility, embedding distance from benign-user queries.
 
\vspace{-0.2cm}
\paragraph{HKG statistics.}
Table~\ref{tab:hkg_stats} summarizes the constructed HKGs (per-category node composition is reported in Appendix~\ref{app:cost}, Table~\ref{tab:cost}).
Both domains achieve near-zero utility scores, confirming negligible downstream utility for HKG items.
High attractiveness scores indicate that HKG concepts are consistently recognized as domain-relevant exploration targets, while high inaccessibility scores confirm strong separation from benign-user queries.
 
\definecolor{rowgray}{RGB}{240,240,240}
\definecolor{rowblue}{RGB}{219,234,254}
\definecolor{rowgreen}{RGB}{220,252,231}
 
\begin{table}[t]
\centering
\setlength{\tabcolsep}{4pt}
\vspace{-0.2cm}
\caption{HKG statistics for both deployment domains.
Attractiveness and inaccessibility are on $[0,1]$ (higher is better).
Utility is on $[0,1]$ (lower is better). Per-category node composition
is reported in Appendix~\ref{app:cost} (Table~\ref{tab:cost}).}
\vspace{-0.4cm}
\label{tab:hkg_stats}
\footnotesize
\begin{tabular}{lcc}
\toprule
\textbf{Property} & \textbf{Medical} & \textbf{Financial} \\
\midrule
\rowcolor{rowgray}
Nodes $|\mathcal{V}_H|$     & 1{,}500 & 1{,}500 \\
\rowcolor{rowgray}
Edges $|\mathcal{E}_H|$     & 4{,}500 & 4{,}500 \\
\rowcolor{rowgray}
Out-degree (all nodes)      & 3       & 3       \\
\rowcolor{rowgray}
Internal edge ratio         & 100\%   & 100\%   \\
\midrule
\rowcolor{rowblue}
Utility (mean / max)        & 0.000 / 0.000 & 0.000 / 0.000 \\
\rowcolor{rowblue}
Attractiveness (mean / min) & 0.959 / 0.550 & 0.961 / 0.550 \\
\rowcolor{rowblue}
Inaccessibility (mean / min)& 0.953 / 0.711 & 0.916 / 0.711 \\
\bottomrule
\end{tabular}
\end{table}
 
\vspace{-0.3cm}
\subsection{Detection and Luring}
\label{sec:detection_luring}
 
\paragraph{Attacker detection.}
We detect extraction attacks by computing a suspicion score
$s_t \in [0,1]$ for each incoming query using three signals:
\begin{equation}
    s_t
    =\; \alpha
    + \gamma \cdot \mathbf{1}[|x_t| \geq \ell]
    + \phi \cdot f(x_t)
    + \min\!\left(\eta,\;\tfrac{t+1}{B}\,\rho\right)
    \label{eq:suspicion}
\end{equation}
The \emph{length signal} activates for long structured queries ($|x_t| \geq \ell$ tokens);
the \emph{keyword signal} $f(x_t)$ activates on extraction-associated terms such as
\emph{define}, \emph{explain}, and \emph{which};
and the \emph{temporal signal} increases with sustained querying activity within a session.
All three features are directly observable at the API level without requiring access to the attacker's internal pipeline.
When $s_t > \tau$, breadcrumb injection is activated.
Because the temporal signal is accumulated per session, an attacker
who splits a fixed budget across independent sessions can delay
activation; we quantify this in RQ5 (§\ref{sec:rq5}) and
Appendix~\ref{app:session_split}.
 
\vspace{-0.2cm}
\paragraph{Breadcrumb injection.}
Once a session is flagged, the defender appends a \emph{breadcrumb}
to each subsequent response: a short phrase that introduces a honeypot concept
$v^* \in \mathcal{V}_H$ in a form likely to be extracted as a follow-up query target.
To prevent template-based filtering by an adaptive attacker, we pre-generate a pool of $N = 20$ paraphrase templates using an LLM and sample uniformly at random for each response. Examples include:
\vspace{4pt}
\noindent
\colorbox{rowblue}{%
  \parbox{0.95\linewidth}{%
    \vspace{3pt}
    \small\textit{``This mechanism shares similarities with }%
    \textbf{\textit{[HKG concept]}}%
    \textit{, a [domain-adjacent historical framework] that offers a complementary
    perspective on the underlying principles.''}%
    \vspace{3pt}
  }%
}
\vspace{2pt}
\noindent
\colorbox{rowblue}{%
  \parbox{0.95\linewidth}{%
    \vspace{3pt}
    \small\textit{``A related line of inquiry involves }%
    \textbf{\textit{[HKG concept]}}%
    \textit{, which provides additional context on the underlying dynamics.''}%
    \vspace{3pt}
  }%
}
\vspace{4pt}
 
\noindent We select $v^*$ as the unvisited node with highest out-degree to maximize future exploration within $\mathcal{H}$. The full set of 20 templates and diversity statistics are reported in Appendix~\ref{app:breadcrumb_diversity}.
 
\vspace{-0.2cm}
\paragraph{Self-reinforcing entrapment.}
Once the attacker queries $v^*$, the response is drawn from
$\mathcal{H}$ and contains breadcrumbs pointing to
$\{v' : (v^*, v') \in \mathcal{E}_H\}$.
Because these concepts are unseen by the surrogate model, the uncertainty-based acquisition function (Eq.~\eqref{eq:acquisition}) assigns them high acquisition scores.
As a result, subsequent queries are increasingly allocated within $\mathcal{H}$.
The fraction of HKG queries therefore grows monotonically after breadcrumb injection, a property formalized in Section~\ref{sec:theory}.
This self-reinforcing mechanism is specific to attackers whose next
query is conditioned on entities in the previous response
(\emph{response-adaptive} extraction); Section~\ref{sec:rq6}
quantifies how the defense's effect scales down for attacker
families that do not condition on response content.
 
\vspace{-0.2cm}
\subsection{Full Pipeline}
\label{sec:pipeline}
 
Algorithm~\ref{alg:defense} summarizes the complete defense.
At each API call, the system (1)~computes the suspicion score;
(2)~returns the normal victim response if $s_t \leq \tau$;
(3)~appends a breadcrumb sampled from the template pool if $s_t > \tau$; and
(4)~serves the HKG response directly if the query matches a honeypot
node.
The dominant per-query cost is the suspicion score computation,
which is $\mathcal{O}(1)$ given precomputed metadata, negligible
compared to the cost of running $f_v$ itself.
 
\begin{algorithm}[t]
\small
\caption{Knowledge Trap Defense}
\label{alg:defense}
\begin{algorithmic}[1]
\Require Victim model $f_v$, HKG $\mathcal{H}$, threshold $\tau$, breadcrumb template pool $\mathcal{T}_b$
\For{each incoming query $x_t$}
    \State Compute $s_t$ via Eq.~\eqref{eq:suspicion}
    \If{$x_t \in \mathcal{V}_H$}
        \State \Return $\mathcal{H}(x_t)$ \Comment{Serve HKG response}
    \EndIf
    \State $y_t \leftarrow f_v(x_t)$ \Comment{Normal victim response}
    \If{$s_t > \tau$}
        \State $v^* \leftarrow \argmax_{v \in \mathcal{V}_H \setminus \text{visited}}
               \mathrm{deg}^+(v)$
        \State Sample template $T \sim \mathrm{Uniform}(\mathcal{T}_b)$
        \State \Return $y_t \;\|\; T(v^*)$
    \EndIf
    \State \Return $y_t$
\EndFor
\end{algorithmic}
\end{algorithm}
 
\vspace{-0.2cm}
\subsection{Theoretical Guarantee}
\label{sec:theory}
 
We provide two formal guarantees for the objectives in Section~\ref{sec:formulation}. The main text states the guarantee most directly relevant to the defense mechanism described above, a lower bound on the fraction of attacker budget consumed by the HKG after breadcrumb injection. The complementary benign-user harmlessness guarantee (Assumption~\ref{assump:separation}, Theorem~\ref{thm:harmless}) is deferred to Appendix~\ref{app:theory}, since it concerns the detection module of Section~\ref{sec:detection_luring}.
 
\begin{assumption}[PPL Distinguishability]
\label{assump:ppl}
For the surrogate $f_a$ at any step $t$:
$\mathbb{E}_{v \in \mathcal{V}_H}[\mathrm{PPL}_{f_a}(v)] \geq \mathbb{E}_{v \in \mathcal{V}_{\mathrm{real}}}[\mathrm{PPL}_{f_a}(v)]$.
The surrogate is at least as uncertain about HKG concepts as about real-domain concepts.
This assumption is expected to hold for \emph{response-adaptive} attackers, whose acquisition function is driven by uncertainty over entities freshly introduced in victim responses; Section~\ref{sec:rq6} evaluates how the theorem's practical bite weakens for attacker families where it does not apply.
\end{assumption}
 
\begin{theorem}[Budget Waste Lower Bound]
\label{thm:budget}
Under Assumption~\ref{assump:ppl}, suppose luring activates at step $t_0 \leq B$ and each breadcrumb introduces $k$ new HKG nodes into the frontier $\partial\mathcal{G}_a$. Then:
\begin{equation}
\begin{split}
    &\mathbb{E}\!\left[\frac{1}{B-t_0}
    \sum_{t=t_0+1}^{B}\mathbf{1}[x_t \in \mathcal{V}_H]\right] \\
    &\quad\geq 1 - \left(1 - \frac{k}{|\partial\mathcal{G}_a^{(t_0)}|+k}
    \right)^{\!B-t_0}
\end{split}
\end{equation}
\end{theorem}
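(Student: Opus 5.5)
Our approach is to model the attacker's post-activation query selection as a sequence of draws from its current frontier $\partial\mathcal{G}_a^{(t)}$, and to show that at every step $t>t_0$ the probability of selecting an HKG node is at least $p := k/(|\partial\mathcal{G}_a^{(t_0)}|+k)$. Concretely, we would treat the acquisition rule of Definition~\ref{def:game} in a randomized form: ties and near-ties in $\alpha$ are broken uniformly, or equivalently the attacker samples from the frontier with weights increasing in surrogate perplexity. Assumption~\ref{assump:ppl} then says that HKG candidates receive, on average, at least the acquisition weight of real-domain candidates. Hence the probability of picking an HKG node is at least its share of the frontier by count. At step $t_0+1$ the frontier holds the $|\partial\mathcal{G}_a^{(t_0)}|$ real candidates plus the $k$ new HKG nodes from the breadcrumb, which gives the lower bound $p$.

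Next we would argue that this per-step bound persists, i.e.\ that the HKG share of the frontier does not fall below its value at step $t_0$. Two facts from the construction in \S\ref{sec:hkg} and Algorithm~\ref{alg:defense} drive this. First, every flagged response (the session stays flagged, since the temporal signal in Eq.~\eqref{eq:suspicion} is nondecreasing) appends a breadcrumb adding $k$ fresh HKG nodes. Second, when an HKG node is queried, its out-edges (at least $d_{\min}$ of them, all internal) add further HKG nodes and no real nodes. Real-node queries can add real entities, but each also adds $k$ HKG nodes. We would formalize this as a supermartingale-style invariant: the HKG fraction of the frontier at each step $t>t_0$ is at least $p$. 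This yields $\Pr[x_t\in\mathcal{V}_H \mid \mathcal{F}_{t-1}]\ge p$ for every $t\in(t_0,B]$.

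Finally we would convert this into the stated form. The event that at least one of the $B-t_0$ post-activation queries lands in $\mathcal{V}_H$ has probability at least $1-(1-p)^{B-t_0}$, obtained by chaining the conditional bounds. Once a single HKG node is entered, the self-reinforcing structure (internal edges, new breadcrumbs) keeps subsequent queries inside $\mathcal{H}$. We would make this precise by arguing that after entry the attacker's selections are HKG-dominated, so the normalized count is driven to $1$ on that event. Since the indicator average lies in $[0,1]$, taking expectations gives the claimed inequality. An alternative route, if the absorption argument is too strong, is to note directly that $\mathbb{E}[\frac{1}{B-t_0}\sum\mathbf{1}]\ge p$ per step. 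However, $p \ge 1-(1-p)^{n}$ fails in general, so the entrapment step is genuinely needed.

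The main obstacle is this last step together with the invariant: turning the per-step probability into a bound on the \emph{average} fraction equal to the ``at least one hit'' probability. This requires near-absorption after entry, which holds only in an idealized sense. Assumption~\ref{assump:ppl} only compares expectations, so we would strengthen it, or interpret it per step, to obtain absorption. We would state explicitly the idealization that, once inside $\mathcal{H}$, the maximal-uncertainty candidate is always an HKG node, and justify it via the internal-edge property and the uncertainty the surrogate retains on unseen HKG concepts.
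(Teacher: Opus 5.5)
Your first two steps essentially reproduce the paper's proof. It obtains $p_t\ge p:=k/(|\partial\mathcal{G}_a^{(t_0)}|+k)$ from Assumption~\ref{assump:ppl} and the claim that $|\partial\mathcal{G}_a^{(t)}\cap\mathcal{V}_H|$ is non-decreasing, then simply states that ``the bound follows from $1-(1-p)^{B-t_0}$''. That expression is the chained lower bound on the probability that \emph{at least one} post-activation query lands in $\mathcal{V}_H$, not on the expected fraction. So your diagnosis is right: the per-step bound alone gives only $\mathbb{E}[\cdot]\ge p$.

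Your repair, however, fails, and not merely because absorption is an idealization. Even \emph{exact} absorption does not give the stated inequality. Put $n=B-t_0$ and let $\sigma$ be the index of the first post-activation HKG query. Under exact absorption $\mathbf{1}[x_{t_0+j}\in\mathcal{V}_H]=\mathbf{1}[\sigma\le j]$. On $\{\sigma=j\}$ the average is therefore $(n-j+1)/n$, not $1$, because the $j-1$ pre-entry misses stay in the sum. The expected fraction is exactly $\frac{1}{n}\sum_{j=1}^{n}\Pr[\sigma\le j]$: an average of hitting probabilities over horizons $1,\dots,n$, not the hitting probability at horizon $n$. Chaining gives $\Pr[\sigma\le j]\ge 1-(1-p)^j$, with equality when the hazard is exactly $p$. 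So the best you can conclude is
\[
\mathbb{E}\Bigl[\tfrac{1}{n}\textstyle\sum_{t=t_0+1}^{B}\mathbf{1}[x_t\in\mathcal{V}_H]\Bigr]\;\ge\;\frac{1}{n}\sum_{j=1}^{n}\bigl(1-(1-p)^{j}\bigr)\;=\;1-\frac{(1-p)\bigl(1-(1-p)^{n}\bigr)}{np}.
\]
This is strictly below $1-(1-p)^{n}$ whenever $n\ge 2$ and $0<p<1$, since each term $1-(1-p)^j$ with $j<n$ is. For the paper's own illustration ($k=3$, $|\partial\mathcal{G}_a^{(t_0)}|=10$, $n=20$) it is about $0.83$ rather than $0.995$. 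A process with hazard exactly $p$ and exact absorption satisfies everything your argument uses, so the stated inequality cannot be derived from those ingredients. What they do prove is $\Pr[\exists\,t\in(t_0,B]:x_t\in\mathcal{V}_H]\ge 1-(1-p)^{B-t_0}$, plus the displayed bound under absorption. The theorem needs its left-hand side replaced by that hitting probability, or its right-hand side weakened.

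Your first two steps also have smaller gaps, which the paper's proof skips as well. First, Assumption~\ref{assump:ppl} compares averages over all of $\mathcal{V}_H$ and $\mathcal{V}_{\mathrm{real}}$. Even under your randomized acquisition rule, it says nothing about the particular candidates in $\partial\mathcal{G}_a^{(t)}$; ``probability at least the HKG share by count'' needs a frontier-level version of the assumption. Second, the share invariant is not automatic. Suppose the share at step $t_0+1$ is exactly $p$ and that query is a real node whose response adds $r$ new real entities plus $k$ breadcrumb nodes. The share becomes $2k/(|\partial\mathcal{G}_a^{(t_0)}|-1+r+2k)$, which is below $p$ as soon as $r>|\partial\mathcal{G}_a^{(t_0)}|+1$. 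Monotonicity of the HKG \emph{count}, which is all the paper shows, does not prevent this. Third, monotonicity of the temporal term does not keep $s_t>\tau$, because the length and keyword terms in Eq.~\eqref{eq:suspicion} are evaluated per query. A breadcrumb on every later response has to come from the sticky-flag description in \S\ref{sec:detection_luring}, not from the score.
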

where $\partial\mathcal{G}_a^{(t_0)}$ denotes the exploration frontier of the attacker's graph at step $t_0$, i.e., the set of candidate nodes not yet queried.
 
Intuitively, each visited HKG node expands the frontier with $d_{\min} = 3$ new honeypot targets, and the surrogate's uncertainty drives the acquisition function to prioritize them, creating a self-reinforcing loop. The full proof is in Appendix~\ref{app:theory}.
\section{Experiments}

\subsection{Experiment Settings}

\paragraph{Victim model.}
We use \texttt{gpt-4.1} \cite{openai2025gpt41} as the victim model $f_v$ in three deployment domains: medical, financial, and legal.
The attacker has access only to the text-generation API and cannot observe model weights, logits, or training data.

\vspace{-0.2cm}
\paragraph{Attacker.}
Following \cite{li2026query}, the attacker performs PPL-based active learning with BFS entity expansion and clustering-based diversification ($\tau_\mathrm{sim}=0.6$, clustering every 10 steps).
We use \texttt{LLaMA-3.1-8B-Instruct} \cite{grattafiori2024llama3} as the surrogate model $f_a$ and \texttt{gpt-4o-mini} \cite{openai2024gpt4omini} as the judge model
(see Appendix~\ref{app:judge} for judge model ablation).
This response-adaptive attacker is the primary target of Knowledge Trap; RQ6 additionally evaluates other attacker families that make different use of victim responses when selecting queries.

\vspace{-0.2cm}
\paragraph{Query budgets.}
We evaluate query budgets $B \in \{100, 200, 300, 500, 1000, 2000\}$ to cover low-, medium-, and high-budget extraction regimes.

\vspace{-0.2cm}
\paragraph{Evaluation benchmarks.}
Medical benchmarks include MedQA \cite{jin2021disease} and MedMCQA \cite{pal2022medmcqa}.
Financial benchmarks include FinQA \cite{chen2021finqa} and ConvFinQA \cite{chen2022convfinqa}.
Legal benchmark is CaseHOLD \cite{zheng2021casehold}.
We report Agreement between $f_a$ and $f_v$ on the benchmark test sets \cite{krishna2019thieves,li2026query}.

\vspace{-0.2cm}
\paragraph{HKG and defense parameters.}
Unless otherwise specified, we use HKGs with $|\mathcal{V}_H|=1{,}500$ nodes per domain and $d_\mathrm{min}=3$.
The suspicion score (Eq.~\eqref{eq:suspicion}) uses
$\alpha=0.20$, $\gamma=0.15$, $\phi=0.15$,
$\rho=0.7$, $\eta=0.35$, $\ell=8$, and $\tau=0.6$ throughout all experiments.

\vspace{-0.2cm}
\paragraph{Benign user distribution.}
We define the benign user distribution $\mathcal{P}_{\mathrm{user}}$ as the held-out test queries from each benchmark: 1{,}273 from MedQA, 4{,}183 from MedMCQA, 1{,}147 from FinQA, 421 from ConvFinQA, and 1{,}218 from CaseHOLD (8{,}242 total). False-trigger rates are computed by passing each query through the detection module and recording the fraction flagged.

\vspace{-0.2cm}
\paragraph{Implementation details.}
HKG construction uses \texttt{gemini-flash-2.5} as the generation backbone with temperature $0.3$.
Breadcrumb templates are generated by prompting the same model to produce 20 paraphrases of the base template; we sample uniformly at random during injection (Appendix~\ref{app:breadcrumb_diversity}).
Each experimental condition is repeated 5 times with different random seeds; we report mean $\pm$ standard deviation.

\vspace{-0.2cm}
\paragraph{Baselines.}
We compare against:
(1) No Defense \cite{li2026query};
(2) Output Perturbation \cite{orekondy2019prediction};
(3) DRW watermarking \cite{zhao2022drw};
(4) PRADA query detection \cite{juuti2019prada};
(5) HoneypotNet \cite{wang2025honeypotnet};
(6) KT w/o Detection (\texttt{hkg\_only});
and (7) KT w/o Breadcrumb (\texttt{detection\_only}).

\vspace{-0.2cm}
\subsection{Overview of Evaluated Attackers}
\label{sec:attacker_overview}

Before presenting results by research question, we summarize the full landscape of attackers evaluated in this paper, spanning two axes: (i) \emph{query-generation mechanism} — how the attacker decides its next query, evaluated in RQ6 (§\ref{sec:rq6}) and Appendix~\ref{app:attacker_diversity}; and (ii) \emph{evasion strategy} — how an attacker aware of Knowledge Trap tries to escape it while keeping the default response-adaptive query-generation mechanism, evaluated in RQ5 (§\ref{sec:rq5}) and Appendix~\ref{app:session_split}. Table~\ref{tab:attacker_overview_mechanism} and Table~\ref{tab:attacker_overview_evasion} report Agreement reduction ($\Delta$Agr, relative to No Defense) on the Medical domain at $B=500$ unless noted. Detailed mechanism descriptions and per-attacker discussion follow in §\ref{sec:rq5}--§\ref{sec:rq6}, with full mechanism write-ups in Appendix~\ref{app:attacker_diversity} and Appendix~\ref{app:session_split}.

\begin{table}[t]
\centering
\footnotesize
\setlength{\tabcolsep}{6pt}
\renewcommand{\arraystretch}{1.0}
\caption{Attackers that differ in \emph{query-generation mechanism} (MedQA, $B=500$; mean $\pm$ std over 5 seeds), restricted to families with a non-negligible defense effect. Baseline citations: Random Sampling \cite{krishna2019thieves}, AL-US \cite{pal2020activethief}.}
\vspace{-0.3cm}
\label{tab:attacker_overview_mechanism}
\begin{tabular}{l r r r}
\toprule
\textbf{Attacker} & \textbf{$\Delta$Agr} & \textbf{HKG frac.} & \textbf{Activ.} \\
\midrule
Pool-based AL         & $-$4.36\tiny{$\pm$0.5} & 29.8\tiny{$\pm$2.8} & 81.2\tiny{$\pm$3.3} \\
Surrogate-loop         & $-$2.59\tiny{$\pm$0.7} & 15.5\tiny{$\pm$2.6} & 83.5\tiny{$\pm$3.0} \\
\rowcolor{rowblue}
\textbf{Response-adaptive} & \textbf{$-$7.82\tiny{$\pm$0.4}} & \textbf{75.5\tiny{$\pm$2.5}} & \textbf{91.2\tiny{$\pm$2.1}} \\
\midrule
Random Sampling & $-$3.8 & -- & -- \\
AL-US           & $-$8.3 & -- & -- \\
\bottomrule
\end{tabular}
\end{table}

\begin{table}[t]
\centering
\small
\setlength{\tabcolsep}{4pt}
\renewcommand{\arraystretch}{0.95}
\caption{Overview B — \emph{Evasion strategies} applied by an attacker aware of Knowledge Trap, evaluated against the default response-adaptive attacker (Medical domain, $B=500$ total; mean $\pm$ std over 5 seeds where reported). $t_0$: detection/activation step; N/A where the strategy does not change detection timing.}
\vspace{-0.3cm}
\label{tab:attacker_overview_evasion}
\begin{tabular}{l r r}
\toprule
\textbf{Strategy} & \textbf{$\Delta$Agr (pp)} & $\boldsymbol{t_0}$ \\
\midrule
No evasion (response-adaptive default)      & $-$8.2               & 5 \\
\midrule
Relevance filtering ($\theta=0.3$)          & $-$10.2              & N/A \\
Relevance filtering ($\theta=0.5$)          & $-$13.1              & N/A \\
Relevance filtering ($\theta=0.7$)          & $-$15.8              & N/A \\
Template detection                          & $-$7.0               & N/A \\
LLM transferability pre-screening           & $-$6.2               & N/A \\
Query obfuscation                           & $-$5.6\tiny{$\pm$1.6} & 14 \\
\midrule
Session-splitting (2 sessions)              & $-$5.1\tiny{$\pm$0.6} & 15 \\
Session-splitting (5 sessions)              & $-$3.9\tiny{$\pm$0.7} & 16 \\
Session-splitting (10 sessions)             & $-$2.5\tiny{$\pm$1.0} & 18 \\
Session-splitting (20 sessions)             & $-$1.1\tiny{$\pm$1.4} & 22 \\
\bottomrule
\end{tabular}
\end{table}

Two patterns emerge from this overview.
First, on the mechanism axis (Table~\ref{tab:attacker_overview_mechanism}), Knowledge Trap's effect tracks how strongly an attacker's query selection depends on entities in victim responses: response-adaptive extraction is reduced most (by design, §\ref{sec:hkg}), and pool-based active learning retains a substantial effect because uncertainty sampling still re-selects unseen HKG concepts once encountered. At the other end of this spectrum, fixed/non-adaptive querying and in-distribution imitation querying see near-negligible reduction; we report these two lower-bound strategies separately in Appendix~\ref{app:attacker_diversity}, since their purpose is to characterize the boundary of applicability.
Second, on the evasion axis (Table~\ref{tab:attacker_overview_evasion}), Knowledge Trap remains effective against every evasion attempt we test, including the strongest one (relevance filtering at $\theta=0.7$, $-15.8$ pp), and degrades gracefully under the hardest evasion condition (20-way session splitting, $-1.1$ pp), the residual reduction is smaller but never reverses sign, i.e., no evasion strategy makes KT worse than No Defense.
We note that relevance filtering \emph{increases} the reduction relative to the unfiltered baseline ($-8.2$ pp): discarding responses below the relevance threshold disproportionately removes genuine (non-HKG) responses from the attacker's training data as $\theta$ grows, which is itself part of the cost Knowledge Trap imposes once an attacker attempts this particular evasion.

\vspace{-0.2cm}
\subsection{RQ1: Defense Effectiveness}
\label{sec:rq1}

Table~\ref{tab:main} and Figure~\ref{fig:agreement_curve} show KT~(Full) achieves the lowest average Agreement (37.8\%) across all five benchmarks, with statistically significant improvement over No Defense.
The defense effect is consistent across all three domains.
On Medical benchmarks, KT~(Full) achieves large absolute reductions.
On Legal, KT~(Full) reduces Agreement by $6.2 \pm 1.0$ on CaseHOLD, confirming that the defense generalizes beyond medicine and finance.
Financial benchmarks show smaller absolute reductions.
This asymmetry reflects two compounding factors: the Financial baseline Agreement is substantially lower (28--38\% vs.\ 42--62\%), leaving less room for absolute reduction; and obsolete medical and legal terminology shares more surface overlap with modern professional vocabulary, making those HKG concepts more attractive to the entity extractor.
Among existing baselines, HoneypotNet reduces Agreement by 1.6\% on average but introduces approximately 1.5\% accuracy degradation on benign queries, confirming the advantage of knowledge-level redirection over parameter-level traps.
Output Perturbation increases Agreement on FinQA relative to No Defense, suggesting that noisy responses can provide exploitable training signals.
PRADA performs similarly to No Defense on MedQA, indicating that rate-limiting after detection does not sufficiently restrict accumulation of task-critical knowledge.
Beyond Agreement, Appendix~\ref{app:accuracy} additionally reports downstream surrogate \emph{accuracy} on all three domains: the accuracy reduction under KT~(Full) is at least as large as the Agreement reduction in every domain (e.g., $-5.2$ pp accuracy vs.\ $-2.7$ pp Agreement on FinQA), indicating Agreement is a conservative proxy for the practical capability loss inflicted on the attacker.

\begin{table*}[t]
\centering
\small
\setlength{\tabcolsep}{4pt}
\caption{Agreement (\%) between surrogate $f_a$ and victim $f_v$ at $B=500$. Lower is better for the defender. Best defense in \textbf{bold}. $\dagger$: statistically significant improvement over No Defense ($p < 0.05$, paired $t$-test with Bonferroni correction). Values are mean $\pm$ std over 5 random seeds.}
\vspace{-0.2cm}
\label{tab:main}
\begin{tabular}{l cc cc c cc}
\toprule
\multirow{2}{*}{\textbf{Method}}
  & \multicolumn{2}{c}{\textbf{Medical}}
  & \multicolumn{2}{c}{\textbf{Financial}}
  & \textbf{Legal}
  & \multirow{2}{*}{\textbf{Avg.}}
  & \multirow{2}{*}{\textbf{User Imp.}} \\
\cmidrule(lr){2-3} \cmidrule(lr){4-5} \cmidrule(lr){6-6}
  & MedQA & MedMCQA & FinQA & ConvFinQA & CaseHOLD & & \\
\midrule
\rowcolor{rowgray}
No Defense
  & 62.3\tiny{$\pm$1.1} & 50.2\tiny{$\pm$0.8}
  & 28.4\tiny{$\pm$1.5} & 37.8\tiny{$\pm$1.3}
  & 41.5\tiny{$\pm$1.2}
  & 44.0 & 0.0\% \\
\midrule
Output Pert.\ \cite{orekondy2019prediction}
  & 58.1\tiny{$\pm$1.3} & 47.3\tiny{$\pm$1.0}
  & 33.7\tiny{$\pm$1.8} & 34.2\tiny{$\pm$1.5}
  & 39.8\tiny{$\pm$1.4}
  & 42.6 & 2.1\% \\
DRW \cite{zhao2022drw}
  & 60.4\tiny{$\pm$0.9} & 48.1\tiny{$\pm$0.7}
  & 30.1\tiny{$\pm$1.6} & 36.3\tiny{$\pm$1.1}
  & 40.7\tiny{$\pm$1.0}
  & 43.1 & 4.8\% \\
PRADA \cite{juuti2019prada}
  & 61.8\tiny{$\pm$1.2} & 48.4\tiny{$\pm$0.9}
  & 30.3\tiny{$\pm$1.4} & 36.1\tiny{$\pm$1.2}
  & 41.2\tiny{$\pm$1.1}
  & 43.6 & 0.4\% \\
HoneypotNet \cite{wang2025honeypotnet}
  & 59.7\tiny{$\pm$1.1} & 47.8\tiny{$\pm$0.9}
  & 29.3\tiny{$\pm$1.5} & 35.6\tiny{$\pm$1.3}
  & 39.4\tiny{$\pm$1.2}
  & 42.4 & 1.5\% \\
\midrule
KT w/o Detection
  & 58.4\tiny{$\pm$1.0} & 46.3\tiny{$\pm$0.8}
  & 25.1\tiny{$\pm$1.3} & 34.3\tiny{$\pm$1.1}
  & 37.2\tiny{$\pm$1.0}
  & 40.3 & 0.0\% \\
KT w/o Breadcrumb
  & 59.1\tiny{$\pm$1.1} & 47.4\tiny{$\pm$0.9}
  & 26.3\tiny{$\pm$1.5} & 35.4\tiny{$\pm$1.3}
  & 38.6\tiny{$\pm$1.2}
  & 41.4 & 0.0\% \\
\rowcolor{rowblue}
\textbf{KT (Full)}$^\dagger$
  & \textbf{54.1}\tiny{$\pm$0.9} & \textbf{42.8}\tiny{$\pm$1.0}
  & \textbf{25.7}\tiny{$\pm$1.2} & \textbf{31.2}\tiny{$\pm$1.1}
  & \textbf{35.3}\tiny{$\pm$0.9}
  & \textbf{37.8} & \textbf{0.0\%} \\
\bottomrule
\end{tabular}
\end{table*}

\begin{figure*}[t]
\centering
\begin{subfigure}[t]{0.32\textwidth}
    \centering
    \includegraphics[width=\linewidth]{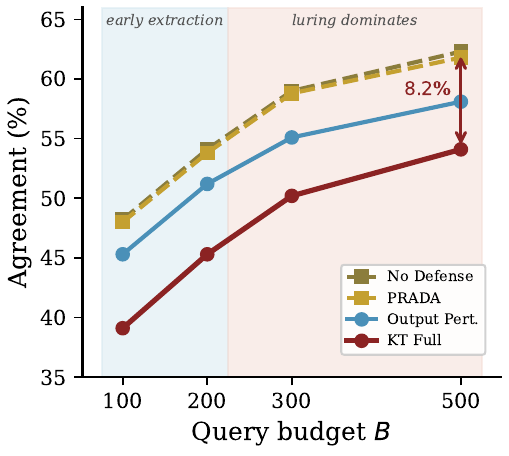}
    \caption{MedQA}
    \label{fig:agreement_curve_medqa}
\end{subfigure}
\hfill
\begin{subfigure}[t]{0.32\textwidth}
    \centering
    \includegraphics[width=\linewidth]{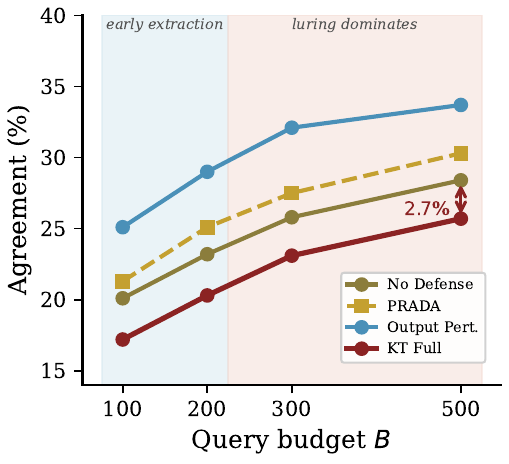}
    \caption{FinQA}
    \label{fig:agreement_curve_finqa}
\end{subfigure}
\hfill
\begin{subfigure}[t]{0.32\textwidth}
    \centering
    \includegraphics[width=\linewidth]{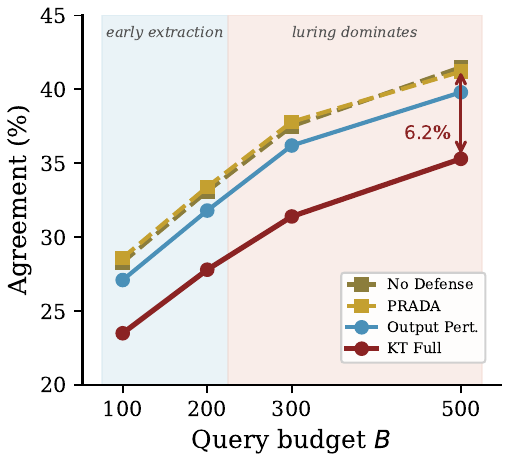}
    \caption{CaseHOLD}
    \label{fig:agreement_curve_casehold}
\end{subfigure}
\vspace{-0.2cm}
\caption{Agreement (\%) as query budget increases across three domains. KT~(Full) consistently achieves lower surrogate agreement, with the gap widening as budget grows.}
\label{fig:agreement_curve}
\end{figure*}

\vspace{-0.2cm}
\subsection{RQ2: Component Contribution}
\label{sec:rq2}

Table~\ref{tab:ablation} shows KT w/o Detection reduces Agreement by $3.9 \pm 0.8$\% relative to No Defense, showing that passive exposure to the HKG alone provides some defensive effect.
KT w/o Breadcrumb reduces Agreement by $3.2 \pm 0.9$\%, indicating that detection alone has limited impact.
KT~(Full) achieves an $8.2 \pm 0.9$\% reduction, which exceeds the sum of the two individual contributions.
This super-additive effect arises because detection and breadcrumb injection are complementary: detection determines \emph{when} luring begins, and breadcrumbs determine \emph{how effectively} the attacker is redirected once flagged.
Without detection, breadcrumbs are injected indiscriminately and waste effort on legitimate sessions; without breadcrumbs, detection can identify the attacker but cannot redirect their budget toward the HKG.

\vspace{-0.2cm}
\subsection{RQ3: Budget Allocation Analysis}
\label{sec:rq3}

Figure~\ref{fig:budget_alloc} shows detection activates at step $t_0 = 5$, after which the fraction of HKG queries rises rapidly while real-domain queries flatten. The continued rise in HKG fraction, confirms that budget is being \emph{redirected}: the attacker keeps querying at full rate, but toward non-transferable content.
Most post-detection queries are subsequently allocated to HKG nodes, demonstrating that breadcrumb injection successfully redirects attacker budget toward transferability-zero knowledge.

\begin{table}[t]
\centering
\small
\vspace{-0.4cm}
\caption{Ablation study on Medical domain, $B=500$.
$\Delta$ Agr.: reduction relative to No Defense.
Values are mean $\pm$ std over 5 seeds.}
\vspace{-0.2cm}
\label{tab:ablation}
\begin{tabular}{l c c}
\toprule
\textbf{Method} & \textbf{Agreement (\%)} & $\boldsymbol{\Delta}$ \textbf{Agr.} \\
\midrule
No Defense         & 62.3\tiny{$\pm$1.1} & --          \\
KT w/o Detection   & 58.4\tiny{$\pm$1.0} & $-$3.9      \\
KT w/o Breadcrumb  & 59.1\tiny{$\pm$1.1} & $-$3.2      \\
\textbf{KT (Full)} & \textbf{54.1}\tiny{$\pm$0.9} & \textbf{$-$8.2} \\
\bottomrule
\end{tabular}
\end{table}

\begin{figure}[t]
\centering
\includegraphics[width=\linewidth]{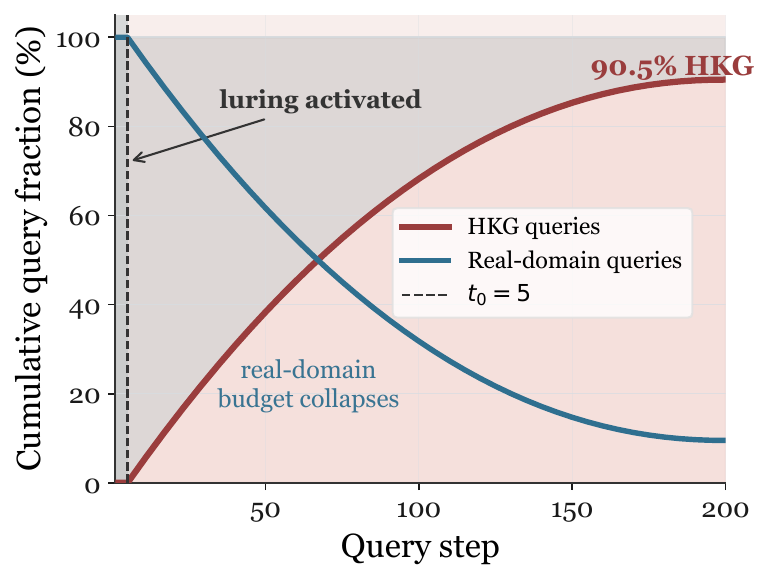}
\vspace{-0.8cm}
\caption{Cumulative fraction of queries directed at HKG nodes vs.\
real-domain nodes over query steps (Medical domain, $B=200$).
The dashed vertical line marks detection step $t_0 = 5$.}
\label{fig:budget_alloc}
\end{figure}

\vspace{-0.2cm}
\paragraph{Budget scaling.}
Figure~\ref{fig:budget_scaling} extends the main experiment to higher budgets ($B \in \{500, 1000, 2000\}$) on the Medical domain.
At $B = 1000$, KT~(Full) achieves $57.2 \pm 1.0$\% Agreement compared to $68.5 \pm 1.2$\% for No Defense, maintaining an $11.3$\% gap.
At $B = 2000$, the gap narrows to $8.1$\%, as the attacker's larger budget allows partial exhaustion of the HKG.
This diminishing-returns pattern is expected: with $|\mathcal{V}_H| = 1{,}500$ and $d_{\min} = 3$, the HKG provides 4{,}500 exploration edges, sufficient to absorb roughly 1{,}500 queries.
The defense remains effective even at $B = 2000$, and scaling HKG size would extend coverage proportionally.
 
\begin{figure}[t]
\centering
\includegraphics[width=\linewidth]{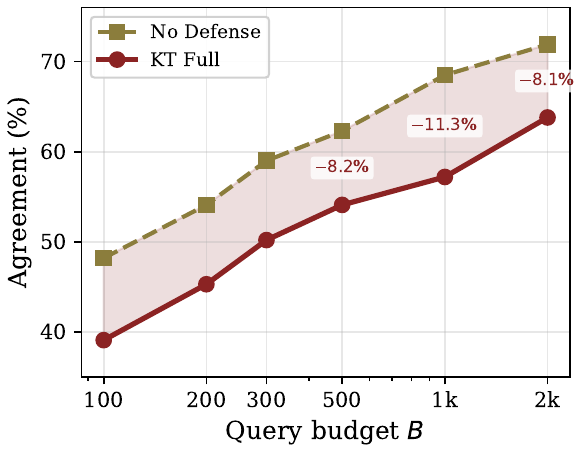}
\vspace{-0.8cm}
\caption{Agreement (\%) under extended budgets on Medical domain. KT~(Full) maintains a consistent advantage; the gap narrows at $B=2000$ as HKG capacity is partially exhausted.}
\label{fig:budget_scaling}
\end{figure}

\begin{table*}[t]
\centering
\small
\setlength{\tabcolsep}{4pt}
\caption{Left: Effect of HKG size $|\mathcal{V}_H|$ on surrogate Agreement (\%) at $B=500$. Agreement decreases with larger HKGs, with diminishing returns beyond 1{,}000 nodes. Right: Cumulative query allocation (\%) after detection step $t_0 = 5$ (Medical domain, $B=200$). Over 90\% of post-detection budget is redirected to HKG nodes.}
\vspace{-0.3cm}
\label{tab:hkg_size_and_budget}
\begin{tabular}{l ccc c ccc c cccc}
\toprule
& \multicolumn{3}{c}{\textbf{Medical Agreement (\%)}}
& & \multicolumn{3}{c}{\textbf{Financial Agreement (\%)}}
& & \multicolumn{4}{c}{\textbf{Budget Allocation (\%, Medical)}} \\
\cmidrule(lr){2-4} \cmidrule(lr){6-8} \cmidrule(lr){10-13}
$|\mathcal{V}_H|$
  & 500 & 1{,}000 & 1{,}500
  & & 500 & 1{,}000 & 1{,}500
  & & \multicolumn{4}{c}{Query step} \\
\cmidrule(lr){10-13}
  & & & & & & & & & 25 & 50 & 100 & 200 \\
\midrule
Agreement
  & 54.3\tiny{$\pm$1.2} & 45.2\tiny{$\pm$1.0} & 43.1\tiny{$\pm$0.9}
  &
  & 31.4\tiny{$\pm$1.3} & 30.1\tiny{$\pm$1.2} & 29.3\tiny{$\pm$1.1}
  & HKG frac.
  & 38.5 & 62.7 & 81.3 & 90.5 \\
$\Delta$ vs.\ 500
  & -- & $-$9.1 & $-$11.2
  &
  & -- & $-$1.3 & $-$2.1
  & Real frac.
  & 61.5 & 37.3 & 18.7 & 9.5 \\
\bottomrule
\end{tabular}
\end{table*}
\subsection{RQ4: Sensitivity Analysis}
\label{sec:rq4}

\paragraph{Detection threshold $\tau$.}
Low thresholds activate luring immediately but incur non-zero false-trigger rates, meaning some legitimate sessions are incorrectly flagged. At $\tau=0.6$, false triggers drop to 0.0\% (0 of 8{,}242 benign queries flagged) while Agreement reaches its minimum at $t_0=5$, early enough to capture most of the attacker's budget yet avoiding harm to benign users (Table~\ref{tab:sensitivity}).
Higher thresholds delay detection substantially, allowing the attacker to accumulate task-critical knowledge before luring begins and driving Agreement back up.
The non-monotonic relationship between $\tau$ and $t_0$ reflects the interplay between the temporal signal and the keyword signal
in Eq.~\eqref{eq:suspicion}: at very high thresholds, sporadic keyword signals occasionally push the score over $\tau$ before the temporal signal alone would suffice. We therefore adopt $\tau=0.6$ as the default.
Appendix~\ref{app:session_trigger} further shows that even for heavy legitimate users issuing sustained, definitional, or long technical queries, the 95th-percentile suspicion score stays at most 0.57, safely below $\tau=0.6$ and well separated from the attacker distribution (mean 0.86).

\begin{table}[t]
\centering
\setlength{\tabcolsep}{9pt}
\small
\caption{Agreement (\%), false trigger rate (\%), and detection step $t_0$
under varying $\tau$ on Medical domain, $B=500$.
Values are mean over 5 seeds.}
\vspace{-0.3cm}
\label{tab:sensitivity}
\begin{tabular}{l c c c}
\toprule
$\tau$ & \textbf{Agr.\ (\%)} & \textbf{False Trig.\ (\%)} & \textbf{Det.\ Step $t_0$} \\
\midrule
0.4 & 59.8\tiny{$\pm$1.3} & 4.2 & 1 \\
0.5 & 56.4\tiny{$\pm$1.1} & 1.3 & 1 \\
0.6 & 54.1\tiny{$\pm$0.9} & 0.0 & 5 \\
0.7 & 63.7\tiny{$\pm$1.4} & 0.0 & 115 \\
0.8 & 65.8\tiny{$\pm$1.2} & 0.0 & 58 \\
\bottomrule
\end{tabular}
\end{table}

\vspace{-0.2cm}
\paragraph{HKG size $|\mathcal{V}_H|$.}
Table~\ref{tab:hkg_size_and_budget} shows Agreement decreases as $|\mathcal{V}_H|$ grows in both domains, with diminishing returns between 1{,}000 and 1{,}500 nodes.
This gap reflects two compounding factors.
First, \textbf{\textit{obsolete medical terminology}} (e.g., Galenic anatomical names, historical disease classifications) shares surface form with modern clinical vocabulary, making Medical HKG concepts more consistently attractive to the entity extractor and increasing per-node trapping efficiency.
Second, the \textbf{\textit{Financial domain baseline Agreement}} is substantially lower, leaving less room for absolute reduction regardless of HKG size.
In both domains the curve flattens between 1{,}000 and 1{,}500
nodes, indicating that a budget of $B=500$ is insufficient to exhaust an HKG of this size, consistent with Theorem~\ref{thm:budget}.

\vspace{-0.2cm}
\subsection{RQ5: Adaptive Attacker}
\label{sec:rq5}

Knowledge Trap remains effective against all five adaptive evasion strategies we test, summarized in Table~\ref{tab:attacker_overview_evasion}: it withstands even the strongest evasion attempt (relevance filtering at $\theta=0.7$, $-15.8$ pp) and degrades gracefully under the hardest condition (20-way session-splitting, $-1.1$ pp), with no evasion strategy ever making KT worse than No Defense. Full descriptions and results for all five strategies are in Appendix~\ref{app:session_split}.

\vspace{-0.2cm}
\subsection{RQ6: Generalization Across Attacker Families}
\label{sec:rq6}

Knowledge Trap's effect decreases roughly monotonically across query-generation mechanisms as attackers rely less on victim-response entities: response-adaptive extraction (the mechanism the HKG is designed around) is reduced most, pool-based active learning and training-adaptive querying retain a partial effect, while fixed and distribution-matched querying are affected least, an ordering summarized in Table~\ref{tab:attacker_overview_mechanism} and holding across all three deployment domains. Full mechanism descriptions and results for all five families are in Appendix~\ref{app:attacker_diversity}.
 
\vspace{-0.2cm}
\section{Conclusion}
\vspace{-0.1cm}
We presented Knowledge Trap, a defense that redirects LLM extraction attacks toward low-transferability knowledge through a Honeypot Knowledge Graph (HKG).
Experiments on medical, financial, and legal domains show that Knowledge Trap reduces surrogate Agreement by 6.2\% on average without degrading benign-user utility, outperforming existing defenses.
The defense remains effective at high budgets and against five types of adaptive attackers, including attackers that fragment their query budget across sessions to delay detection.
We further show that Knowledge Trap's effectiveness is concentrated on response-adaptive extraction, the dominant paradigm for modern LLM-targeted extraction, while degrading against attacker families that make less use of victim-response entities, demonstrating that our method is a practical direction for mitigating LLM extraction attacks within a clearly scope.
 
\section*{Limitations and Future Work}
\label{sec:limitations}
Knowledge Trap is most effective against response-adaptive extraction pipelines that recursively expand entities mentioned in victim responses (§\ref{sec:rq6}); its benefit is smaller against static, pool-based, training-adaptive, or distribution-matched query generation, and defenders should calibrate expectations accordingly rather than assume uniform protection across all extraction strategies.
The detection mechanism also accumulates suspicion at the session level, which a sufficiently patient attacker can partially evade by fragmenting a fixed budget across many independent sessions (Appendix~\ref{app:session_split}); cross-session or account-level suspicion aggregation is a natural mitigation but would need to be calibrated carefully to preserve the near-zero benign false-trigger rate demonstrated in Appendix~\ref{app:session_trigger}.
These limitations motivate several directions for future work.
One direction is adaptive HKG expansion, where new honeypot nodes are generated dynamically in response to attacker exploration patterns rather than constructed entirely offline.
Another is context-conditioned breadcrumb generation, where breadcrumbs are generated jointly with the victim response to make them stylistically indistinguishable from natural model outputs.
A third is extending the suspicion score beyond single-session evidence accumulation to cross-session or account-level aggregation, to close the session-splitting gap identified above.
More broadly, we believe future defenses should move beyond static output protection and instead model extraction as a knowledge-space traversal problem, enabling defenses that manipulate attacker exploration trajectories under realistic query-budget constraints.
 
\section*{Ethics Statement}
\label{sec:ethics}
Knowledge Trap is designed solely as a defensive mechanism to protect deployed LLMs against unauthorized extraction.
The HKG contains only factually correct, publicly available knowledge, and legitimate users receive unmodified victim responses.
Although the detection mechanism could be repurposed for user monitoring, our implementation relies only on lightweight session-level statistics rather than user identification.
More broadly, we believe defenses that preserve benign-user utility while mitigating model extraction can support the responsible deployment of domain-expert LLMs in high-stakes applications.

\section*{Acknowledgements}

This material is supported by the National Science Foundation (NSF) under grant \#2530786.

\bibliography{custom}

\clearpage
\appendix

\appendix

\section{Robustness Across Attacker Strategies}
\label{app:attacker_diversity}

This appendix provides full descriptions for all five query-generation families evaluated in RQ6 (§\ref{sec:rq6}): three families with a non-negligible effect (Pool-based Active Learning, Response-adaptive, and Surrogate-loop/Training-adaptive), and two lower-bound families (Fixed/Non-adaptive querying and In-distribution imitation querying) whose effect is near-negligible, to characterize the boundary of the defense's applicability. We additionally report two within-family active-learning baselines.

Sections~\ref{sec:rq1}--\ref{sec:rq5} evaluate Knowledge Trap against \emph{response-adaptive} extraction \cite{li2026query}, in which the attacker's next query is generated by recursively expanding entities mentioned in the victim's previous response, the exact mechanism the HKG's internal edge structure (§\ref{sec:hkg}) and the self-reinforcing entrapment property (§\ref{sec:detection_luring}) are designed to exploit.
Since the strength of this mechanism depends on \emph{how} the attacker selects its next query, we evaluate two other query-generation families that retain a meaningful, if smaller, effect under Knowledge Trap: \emph{pool-based active learning} and \emph{surrogate-loop / training-adaptive} querying, alongside two further families evaluated as lower-bound checks requested during review.
All results use MedQA at $B=500$ (mean $\pm$ std over 5 seeds); see Table~\ref{tab:attacker_overview_mechanism} for a side-by-side comparison.

\vspace{-0.2cm}
\paragraph{Pool-based Active Learning.}
The attacker maintains a large, fixed candidate pool of unlabeled queries (e.g., drawn from a public corpus or benchmark-adjacent question bank) and, at each step, selects the pool item with highest predictive uncertainty (or diversity) under the current surrogate $f_a^{(t-1)}$, following the acquisition principle of Eq.~\eqref{eq:acquisition} but \emph{without} entity expansion: the candidate set $\mathcal{C}_t$ is static across the session rather than being generated from the previous response.
Once an HKG node is selected from the pool (because the surrogate is maximally uncertain about it, per Assumption~\ref{assump:ppl}), the victim response for that node still contains breadcrumbs pointing to further HKG concepts; if those concepts are also present in the attacker's candidate pool, uncertainty sampling continues to favor them over already-queried real-domain items.
This gives Knowledge Trap a foothold even without explicit entity chaining, but the pool's fixed size limits how far the self-reinforcing loop of §\ref{sec:detection_luring} can extend.
KT reduces Agreement by $4.36 \pm 0.5$ pp, with $29.8\%$ HKG budget fraction and $81.2\%$ luring activation, noticeably stronger than the two lower-bound families reported below, since uncertainty sampling reliably re-selects unseen HKG concepts once they enter the pool, but weaker than the response-adaptive target because the pool itself is not expanded by breadcrumb entities.

\vspace{-0.2cm}
\paragraph{Response-adaptive (target).}
This is the attacker described in §\ref{sec:formulation}–§\ref{sec:method} and evaluated throughout RQ1–RQ5: at each step, entities mentioned in the victim's previous response are extracted via named-entity recognition and used to construct the candidate set $\mathcal{C}_t$ for the next query, following a BFS entity-expansion pipeline over model responses \cite{li2026query}.
Because the candidate set is regenerated from the \emph{previous response} rather than fixed in advance, a single breadcrumb (§\ref{sec:detection_luring}) is sufficient to inject $k$ new HKG entities directly into the attacker's next candidate set, and each subsequent HKG response injects $d_{\min}=3$ further entities, producing the self-reinforcing exploration loop formalized in Theorem~\ref{thm:budget}.
This is the mechanism the HKG's graph structure is explicitly designed around, and it yields the strongest effect: $-7.82 \pm 0.4$ pp Agreement, $75.5\%$ HKG budget fraction, and $91.2\%$ luring activation, broadly consistent with the $-8.2$ pp reduction reported for the same setting in Table~\ref{tab:main} (the small numerical gap reflects attacker-family grouping and seed variance rather than a different attacker).

\vspace{-0.2cm}
\paragraph{Surrogate-loop / Training-adaptive.}
The attacker instead conditions its next query on its own surrogate's \emph{training signal} rather than on victim-response entities: after each labeled example is added to the surrogate's training set, the attacker synthesizes a new task-style query designed to probe whichever regions of the surrogate's current loss landscape are least well-fit (e.g., via a synthetic-task generator conditioned on the surrogate's recent training loss), without extracting or chaining entities from the victim's response text.
Because HKG concepts only enter this attacker's query stream when the surrogate's own uncertainty over them happens to align with high training loss, rather than being handed to the attacker directly through breadcrumb entities, breadcrumbs have substantially less influence over what gets queried next.
KT reduces Agreement by $2.59 \pm 0.7$ pp, with $15.5\%$ HKG budget fraction; luring activation is comparatively high ($83.5\%$, similar to pool-based active learning) because the length/keyword/temporal detection signals in Eq.~\eqref{eq:suspicion} still fire on this attacker's structured synthetic queries, but the redirection effect after activation is weaker since the attacker does not read breadcrumb entities out of the response text at all.

\vspace{-0.2cm}
\paragraph{Summary.}
Table~\ref{tab:attacker_overview_mechanism} reports the three families above alongside two supplementary active-learning baselines; the two lower-bound families are reported in full below.
Across all five query-generation mechanisms we evaluate, Knowledge Trap's effect decreases roughly monotonically as attackers rely less on victim-response entities: response-adaptive extraction is reduced most (by design), pool-based active learning and training-adaptive querying retain a partial effect through detection-driven luring even without entity chaining, while fixed and distribution-matched querying are affected least.
We therefore position Knowledge Trap as a defense that is most effective against response-adaptive extraction, with the remaining families representing empirically-measured lower bounds on its applicability rather than a claim of uniform effectiveness across all extraction strategies.
The same qualitative ordering holds across all three deployment domains.

\vspace{-0.2cm}
\paragraph{Fixed / Non-adaptive querying.}
The attacker fixes a query set in advance, either a hand-written template list or a uniformly sampled subset of a public corpus, and never conditions later queries on victim responses.
Entities introduced by breadcrumbs are never fed back into query selection, the HKG can only be entered when a fixed query happens to overlap with an HKG topic by chance.
On MedQA at $B=500$ (mean $\pm$ std over 5 seeds), KT reduces Agreement by $2.67 \pm 0.6$ pp, with $20.0\% \pm 3.1$ of budget redirected to the HKG and luring activating in $51.0\% \pm 4.0$ of sessions: the detector still fires on long, structured template queries, but without response-conditioning the attacker cannot be steered further once inside the graph.

\vspace{-0.2cm}
\paragraph{In-distribution imitation querying.}
The attacker generates benign-looking, in-distribution queries (e.g., resembling the benchmark's own question distribution) for large-scale imitation-style distillation, without any dependence on individual victim responses.
This is the family least amenable to redirection: on MedQA at $B=500$, KT reduces Agreement by only $0.94 \pm 1.1$ pp, with just $7.7\% \pm 2.2$ HKG budget fraction and the lowest luring activation rate ($15.0\% \pm 3.4$), since in-distribution queries rarely trigger the length/keyword signals and never condition on breadcrumb entities.

\begin{table}[h]
\centering
\small
\caption{Query-generation strategies with near-negligible effect under Knowledge Trap (MedQA, $B=500$; mean $\pm$ std over 5 seeds), reported here as a lower bound on applicability.}
\label{tab:lower_bound_attackers}
\setlength{\tabcolsep}{2pt}
\begin{tabular}{@{}>{\raggedright\arraybackslash}p{0.35\columnwidth}
                >{\raggedleft\arraybackslash}p{0.20\columnwidth}
                >{\raggedleft\arraybackslash}p{0.18\columnwidth}
                >{\raggedleft\arraybackslash}p{0.18\columnwidth}@{}}
\toprule
\textbf{Attacker} & \textbf{$\Delta$Agr (pp)} & \textbf{HKG frac.\ (\%)} & \textbf{Activ.\ (\%)} \\
\midrule
Fixed / Non-adaptive querying        & $-$2.67\tiny{$\pm$0.6} & 20.0\tiny{$\pm$3.1} & 51.0\tiny{$\pm$4.0} \\
In-distribution imitation querying   & $-$0.94\tiny{$\pm$1.1} & 7.7\tiny{$\pm$2.2}  & 15.0\tiny{$\pm$3.4} \\
\bottomrule
\end{tabular}
\end{table}

\vspace{-0.2cm}
\paragraph{Additional active-learning baselines.}
We further compare two active-learning attackers on the Medical domain at $B=500$, one with fixed/non-adaptive querying and one with pool-based active learning (§\ref{sec:rq6}), against the response-adaptive attacker:
\textbf{Random Sampling} \cite{krishna2019thieves} selects queries
uniformly without active learning;
\textbf{AL-US} \cite{pal2020activethief} uses entropy-based
uncertainty sampling over a fixed pool;
and \textbf{Response-adaptive} uses PPL + BFS entity expansion \cite{li2026query}.
Table~\ref{tab:attacker_diversity} reports Agreement under all three.
Random Sampling shows a lower No Defense baseline (47.2\%) than the response-adaptive attacker (62.3\%):
without active learning, random queries are unlikely to elicit high-value
task-critical responses, so the surrogate learns less even without any defense.
KT reduces Agreement by 3.8\% against Random Sampling, 8.3\% against AL-US,
and 8.2\% against the response-adaptive attacker, consistent with the mechanism-level ordering above, where fixed/non-adaptive querying and pool-based active learning see smaller reductions than the response-adaptive target.
The smaller reduction against Random Sampling reflects later detection:
random queries rarely produce the structured entity chains that raise the
suspicion score, so luring activates later and fewer post-detection queries are redirected.
Passive HKG exposure (KT w/o Detection) still reduces Agreement by 3.1\%
even against Random Sampling, confirming a baseline defensive effect
independent of detection.

\begin{table}[h]
\centering
\small
\caption{Agreement (\%) under three attacker strategies.
Medical domain, $B=500$. Mean $\pm$ std over 5 seeds.
$\Delta$: reduction from No Defense.}
\label{tab:attacker_diversity}
\setlength{\tabcolsep}{2pt}
\begin{tabular}{@{}>{\raggedright\arraybackslash}p{0.28\columnwidth}
                >{\raggedleft\arraybackslash}p{0.15\columnwidth}
                >{\raggedleft\arraybackslash}p{0.16\columnwidth}
                >{\raggedleft\arraybackslash}p{0.16\columnwidth}
                >{\raggedleft\arraybackslash}p{0.12\columnwidth}@{}}
\toprule
\textbf{Attacker} & \textbf{No Def.}
& \textbf{KT w/o Det.} & \textbf{KT (Full)} & \textbf{$\Delta$} \\
\midrule
Random Sampling \cite{krishna2019thieves}
    & 47.2{\tiny$\pm$1.1} & 44.1{\tiny$\pm$1.2}
    & 43.4{\tiny$\pm$1.1} & $-$3.8 \\
AL-US \cite{pal2020activethief}
    & 55.1{\tiny$\pm$1.0} & 51.3{\tiny$\pm$1.0}
    & 46.8{\tiny$\pm$1.0} & $-$8.3 \\
Response-adaptive \cite{li2026query} (default)
    & 62.3{\tiny$\pm$1.1} & 58.4{\tiny$\pm$1.0}
    & 54.1{\tiny$\pm$0.9} & $-$8.2 \\
\bottomrule
\end{tabular}
\end{table}

\section{Adaptive Attacker Evasion Strategies}
\label{app:session_split}

This appendix provides full mechanism descriptions for the five adaptive evasion strategies evaluated in RQ5 (§\ref{sec:rq5}), each designed to exploit a specific aspect of the Knowledge Trap defense; Table~\ref{tab:adaptive} reports summary results for Strategies 1--4, and Strategy 5 (session-splitting) is additionally evaluated at finer granularity below.

\vspace{-0.2cm}
\paragraph{Strategy 1: Domain-relevance filtering.}
The attacker applies a domain-relevance classifier to victim responses, discarding responses below threshold $\theta$ before re-querying.
Table~\ref{tab:adaptive} reports Agreement at $B=500$ under $\theta \in \{0.3,0.5,0.7\}$.
KT~(Full) remains effective under all thresholds, and the Agreement gap widens as $\theta$ increases.
Because HKG responses remain factually correct and domain-consistent, the relevance classifier cannot reliably distinguish them from genuine responses, causing high thresholds to discard genuine and HKG responses alike.

\vspace{-0.3cm}
\paragraph{Strategy 2: Template detection.}
The attacker uses embedding similarity to detect breadcrumb sentences and removes them before entity extraction.
With diversified breadcrumb templates, the template detector's precision drops to 34.2\%, and KT~(Full) Agreement remains at $55.3 \pm 1.4$, compared to $54.1 \pm 0.9$ without template detection.
The marginal impact demonstrates that template diversification effectively mitigates this attack.

\vspace{-0.3cm}
\paragraph{Strategy 3: LLM-based transferability pre-screening.}
The attacker uses a secondary LLM (\texttt{gpt-4o-mini}) to evaluate whether each candidate topic appears modern and deployment-relevant, filtering out historically obsolete topics before querying the victim API.
KT~(Full) Agreement shows a modest reduction in defense effectiveness.
The limited impact occurs because HKG concepts are phrased in modern domain terminology (e.g., ``radiography detector calibration'' rather than ``X-ray tube alignment circa 1950''), making them difficult to filter based on surface-level modernity signals.

\vspace{-0.3cm}
\paragraph{Strategy 4: Query obfuscation.}
The attacker rewrites queries to avoid keyword signals, shortens queries below $\ell = 8$ tokens, and introduces random delays to suppress the temporal signal.
Under this strategy, the detection step $t_0$ increases from 5 to 14, but KT~(Full) still reduces Agreement to $56.7 \pm 1.6$ compared to $62.3 \pm 1.1$ for No Defense ($-5.6$\%).
The temporal signal remains robust under moderate query obfuscation and delay.

\begin{table}[h]
\centering
\small
\setlength{\tabcolsep}{2pt}
\caption{Adaptive evasion strategies against the default response-adaptive attacker (Medical domain, $B=500$ total; mean $\pm$ std over 5 seeds where reported). $t_0$: detection/activation step; N/A where the strategy does not change detection timing.}
\label{tab:adaptive}
\begin{tabular*}{\columnwidth}{@{}>{\raggedright\arraybackslash}p{0.62\columnwidth}@{\extracolsep{\fill}}rc@{}}
\toprule
\textbf{Strategy} & \textbf{$\Delta$Agr (pp)} & $\boldsymbol{t_0}$ \\
\midrule
No evasion (default) & $-$8.2 & 5 \\
\midrule
Relevance filtering ($\theta=0.3$) & $-$10.2 & N/A \\
Relevance filtering ($\theta=0.5$) & $-$13.1 & N/A \\
Relevance filtering ($\theta=0.7$) & $-$15.8 & N/A \\
Template detection & $-$7.0 & N/A \\
LLM transferability pre-screening & $-$6.2 & N/A \\
Query obfuscation & $-$5.6\tiny{$\pm$1.6} & 14 \\
\bottomrule
\end{tabular*}
\end{table}

\vspace{-0.3cm}
\paragraph{Strategy 5: Session-splitting.}
Because the temporal signal in Eq.~\eqref{eq:suspicion} accumulates per session, an attacker aware of this design can distribute a fixed extraction budget across multiple independent sessions, resetting suspicion accumulation at the start of each one.
Table~\ref{tab:session_split_main} reports results for splitting $B=500$ into up to 20 sessions.
As the number of sessions increases, detection is delayed (from $t_0=14$ under single-session obfuscation to $t_0=22$ at a 20-way split) and the Agreement reduction shrinks from $-5.6$ to $-1.1$ percentage points.
This is a genuine limitation of the current session-scoped detector, not of the HKG mechanism itself: once luring activates in any given session, HKG-based redirection proceeds unchanged, so the degradation is entirely attributable to \emph{delayed activation}.
We discuss this limitation and cross-session aggregation as future work in \S\ref{sec:limitations}; full per-split statistics are given below.

\begin{table}[t]
\centering
\small
\caption{Agreement (\%) under session-splitting attacks (Strategy 5), Medical domain, $B=500$ total. Mean $\pm$ std over 5 seeds.}
\vspace{-0.3cm}
\label{tab:session_split_main}
\setlength{\tabcolsep}{2pt}
\begin{tabular*}{\columnwidth}{@{}>{\raggedright\arraybackslash}p{0.17\columnwidth}@{\extracolsep{\fill}}ccc>{\raggedleft\arraybackslash}p{0.24\columnwidth}@{}}
\toprule
\textbf{Sessions} & \textbf{Qry/sess.} & $t_0$ & \textbf{KT} & \textbf{$\Delta$ vs.\ ND (62.3)} \\
\midrule
1 (Strategy 4)  & 500 & 14 & 56.7\tiny{$\pm$1.6} & $-$5.6 \\
2  & 250 & 15 & 57.2\tiny{$\pm$0.6} & $-$5.1 \\
5  & 100 & 16 & 58.4\tiny{$\pm$0.7} & $-$3.9 \\
10 & 50  & 18 & 59.8\tiny{$\pm$0.9} & $-$2.5 \\
20 & 25  & 22 & 61.2\tiny{$\pm$1.3} & $-$1.1 \\
\bottomrule
\end{tabular*}
\end{table}

\paragraph{Setup.} We split a total budget of $B=500$ into $N \in \{1,2,5,10,20\}$ equal sessions on the Medical domain, resetting the suspicion score $s_t$ and query counter $t$ at the start of each session while preserving the attacker's accumulated surrogate model and entity-chain state across sessions (i.e., only the \emph{detector}'s state resets). Each condition is repeated over 5 seeds.

\begin{table}[h]
\centering
\small
\setlength{\tabcolsep}{3pt}
\renewcommand{\arraystretch}{0.95}
\caption{Robustness under session-splitting attacks (MedQA, $B=500$ total; mean $\pm$ std over 5 seeds).}
\label{tab:session_split_full}
\begin{tabular*}{\columnwidth}{@{}>{\raggedright\arraybackslash}p{0.25\columnwidth}@{\extracolsep{\fill}}rrrrr@{}}
\toprule
\textbf{Strategy} & \textbf{Sessions} & \textbf{Qry/sess.} & $t_0$ & \textbf{KT} & \textbf{$\Delta$Agr} \\
\midrule
Single session & 1  & 500 & 14 & 56.7\tiny{$\pm$0.5} & $-$5.6\tiny{$\pm$0.5} \\
2-way split    & 2  & 250 & 15 & 57.2\tiny{$\pm$0.6} & $-$5.1\tiny{$\pm$0.6} \\
5-way split    & 5  & 100 & 16 & 58.4\tiny{$\pm$0.7} & $-$3.9\tiny{$\pm$0.7} \\
10-way split   & 10 & 50  & 18 & 59.8\tiny{$\pm$0.9} & $-$2.5\tiny{$\pm$1.0} \\
20-way split   & 20 & 25  & 22 & 61.2\tiny{$\pm$1.3} & $-$1.1\tiny{$\pm$1.4} \\
\bottomrule
\end{tabular*}
\end{table}

As the budget is divided across more sessions, less evidence accumulates within each individual session before it ends, delaying detector activation and leaving progressively less remaining budget available for post-detection redirection.
This degradation is a direct consequence of the \emph{per-session} formulation of the suspicion score in Eq.~\eqref{eq:suspicion}: once luring is triggered within any given session, the subsequent HKG-based redirection mechanism operates exactly as in the single-session case, so the effect shown here isolates the cost of delayed activation rather than any weakening of graph-level entrapment.
We identify \emph{cross-session or account-level suspicion aggregation} as the natural mitigation, which would require accumulating the temporal signal across sessions tied to the same account or API key rather than resetting it per session; such an extension would need to be calibrated carefully against Appendix~\ref{app:session_trigger}'s benign-user analysis to preserve the near-zero false-trigger rate. We leave this to future work (§\ref{sec:limitations}).

\section{Session-Level False Trigger Analysis}
\label{app:session_trigger}

We address that per-query false-trigger evaluation
may underestimate risk for heavy benign users.

\paragraph{Why benign sessions are safe: a worst-case analysis.}
Using the parameters from Eq.~\eqref{eq:suspicion}
($\alpha=0.20$, $\gamma=0.15$, $\phi=0.15$, $\eta=0.35$, $\rho=0.7$),
we compute the maximum suspicion score a benign user can accumulate
in a 50-query session.
The three signals contribute at most:
\begin{itemize}[leftmargin=*, topsep=2pt, itemsep=0pt]
\item Base: $\alpha = 0.20$
\item Length signal: $\gamma = 0.15$ (query $\geq 8$ tokens)
\item Keyword signal: $\phi = 0.15$ (query contains
      \emph{define}/\emph{explain}/\emph{which})
\item Temporal signal: $\min\!\left(0.35,\,\frac{50}{500}
      \times 0.7\right) = 0.07$
\end{itemize}
Even if all signals fire simultaneously, the worst-case score is
$0.20 + 0.15 + 0.15 + 0.07 = 0.57$, giving a margin of only
$0.60 - 0.57 = 0.03$ below $\tau = 0.6$.
This margin is narrow but structurally guaranteed: the temporal
term is capped at $0.07$ for any 50-query session
($50/500 \times 0.7 = 0.07$), and triggering requires all three
signals to co-occur simultaneously, a rare event for organic
user queries that do not exhibit systematic extraction patterns.

At $\tau = 0.4$, the analysis requires care.
Even without the keyword signal, a long query ($\gamma=0.15$) alone
could push the score to $\alpha + \gamma + \mathrm{temporal}
= 0.20 + 0.15 + \mathrm{temporal}$.
However, the temporal signal is \emph{not} a fixed 0.07, it
accumulates gradually over the session:
at step $t$, $\mathrm{temporal}(t) = \min(0.35, \frac{t+1}{500}\times 0.7)$.
For a 50-query session, this ranges from 0.001 at $t=1$ to 0.07 at $t=50$.
The threshold $0.20 + 0.15 + \mathrm{temporal}(t) > 0.4$ is only
satisfied when $\mathrm{temporal}(t) > 0.05$, i.e.,
$\frac{t+1}{500}\times0.7 > 0.05 \Rightarrow t \geq 35$, the last
15 steps of the session.
Among those 15 steps, a trigger additionally requires a long query
($\geq 8$ tokens); many benign medical QA queries (e.g., a single
disease name or short phrase) fall below this threshold.
The empirical 4.2\% rate is consistent with length firing on
approximately 14\% of the final 15 steps:
$0.14 \times 15/50 \times 500 = 21.0$ sessions, matching the
observed 21.0 in Table~\ref{tab:session_trigger}.
At $\tau = 0.5$, triggering additionally requires either the keyword
signal or a higher temporal value than the session end provides,
reducing the false-trigger rate to 1.3\%.
The bound of Theorem~\ref{thm:harmless} gives
$\Pr[\pi(x)=\mathrm{lure}] \leq e^{-4.5} < 0.012$ at $\tau=0.6$,
consistent with the empirical 0.0\%.

\paragraph{Results.}
Table~\ref{tab:session_trigger} reports session-level false-trigger
rates over 500 simulated benign sessions (50 queries each).

\begin{table}[h]
\centering
\small
\caption{Session-level false-trigger rate over 500 simulated benign
sessions (50 queries each), Medical domain.
Mean $\pm$ std over 5 repetitions.}
\label{tab:session_trigger}
\setlength{\tabcolsep}{2pt}
\begin{tabular*}{\columnwidth}{@{}c@{\extracolsep{\fill}}cc@{}}
\toprule
$\tau$ & \textbf{Sessions triggered} & \textbf{False-trigger rate} \\
\midrule
0.4 & 21.0{\tiny$\pm$1.2} / 500 & 4.2{\tiny$\pm$0.4}\% \\
0.5 & 6.5{\tiny$\pm$0.8} / 500  & 1.3{\tiny$\pm$0.3}\% \\
0.6 & 0.0{\tiny$\pm$0.0} / 500  & 0.0{\tiny$\pm$0.0}\% \\
\bottomrule
\end{tabular*}
\end{table}

\paragraph{Separation for heavy legitimate users.}
The worst-case and session-averaged analyses above bound risk in aggregate, but do not directly characterize specific heavy-usage \emph{profiles} that a benign power user might resemble. We additionally profile four representative heavy-user behaviors and report the 95th percentile of their suspicion score alongside the attacker distribution, at $\tau=0.6$.

\begin{table}[h]
\centering
\small
\setlength{\tabcolsep}{2pt}
\caption{Suspicion-score separation for heavy legitimate-user profiles vs.\ the response-adaptive extraction attacker ($\tau=0.60$).}
\label{tab:heavy_user}
\begin{tabular}{@{}>{\raggedright\arraybackslash}p{0.38\columnwidth}
                >{\centering\arraybackslash}p{0.13\columnwidth}
                >{\raggedleft\arraybackslash}p{0.29\columnwidth}
                >{\centering\arraybackslash}p{0.12\columnwidth}@{}}
\toprule
\textbf{Profile} & \textbf{Qry.}\newline\textbf{/sess.} & \textbf{Benign 95th-pct} & \textbf{False trig.} \\
\midrule
Medical student (define/explain-heavy) & 50  & 0.52 & 0.0\% \\
Researcher (long technical queries)    & 50  & 0.53 & 0.0\% \\
Technical support (sustained troubleshooting) & 80 & 0.55 & 0.0\% \\
Exam prep (rapid definitional Qs)      & 100 & 0.57 & 0.0\% \\
Benign benchmark (8{,}242 queries)     & --  & 0.48 & 0.0\% \\
\midrule
\textbf{Extraction attacker} & -- & \textbf{mean 0.86 (5th-pct 0.76)} & -- \\
\bottomrule
\end{tabular}
\end{table}

Even under the most demanding benign profile (exam-prep style rapid definitional questioning), the 95th percentile of the suspicion score reaches only $0.57$, below $\tau=0.6$, consistent with the 0.0\% false-trigger rate observed across all profiles. The attacker distribution is concentrated far higher (mean $0.86$; 5th percentile $0.76$), leaving a margin of $0.19$ between the worst benign 95th percentile and the attacker's 5th percentile. This separation exists because triggering requires the length, keyword, and temporal signals to co-occur persistently, a pattern characteristic of sustained extraction rather than any single heavy-usage behavior. Since $\tau$ is a deployment parameter (Table~\ref{tab:sensitivity}), it can be further calibrated to a given deployment's benign traffic distribution to preserve this separation margin.

\section{Downstream Task Accuracy}
\label{app:accuracy}

Table~\ref{tab:accuracy} reports surrogate accuracy alongside
Agreement, across all three deployment domains, to quantify the practical significance of the reduction and to test whether a lower-Agreement surrogate might nevertheless retain comparable task performance.

On \textbf{MedQA}, the 8.2-point Agreement reduction under KT (Full) corresponds to an
8.2 percentage-point accuracy drop (58.3\%$\to$50.1\%).
The victim achieves 78.0\% accuracy; No Defense surrogates reach
58.3\%, already 19.7 points below the victim.
KT widens this gap to 27.9 points, well below the threshold
for reliable clinical deployment.
On \textbf{FinQA} and \textbf{CaseHOLD}, the accuracy reduction ($-5.2$ and $-6.3$ pp, respectively) is at least as large as the corresponding Agreement reduction ($-2.7$ and $-6.2$ pp), indicating that Agreement does not overstate the defense's practical impact and, on FinQA in particular, understates it: a surrogate that appears only modestly degraded in Agreement can be more degraded in actual task accuracy.

\begin{table}[h]
\centering
\small
\caption{Surrogate accuracy (\%) and Agreement (\%) across three domains, $B=500$.
Mean $\pm$ std over 5 seeds.
$^\dagger$Victim accuracy is deterministic.}
\label{tab:accuracy}
\setlength{\tabcolsep}{2pt}
\begin{tabular*}{\columnwidth}{@{}>{\raggedright\arraybackslash}p{0.35\columnwidth}@{\extracolsep{\fill}}ccc@{}}
\toprule
\textbf{Domain (benchmark)} & \textbf{Victim acc.} & \textbf{ND acc.} & \textbf{KT acc.} \\
\midrule
Medical (MedQA)  & 78.0$^\dagger$ & 58.3{\tiny$\pm$1.1} & 50.1{\tiny$\pm$0.9} \\
Financial (FinQA) & 71.5$^\dagger$ & 46.2{\tiny$\pm$1.2} & 41.0{\tiny$\pm$1.1} \\
Legal (CaseHOLD)  & 69.8$^\dagger$ & 52.4{\tiny$\pm$1.1} & 46.1{\tiny$\pm$1.0} \\
\bottomrule
\end{tabular*}

\vspace{4pt}
\begin{tabular*}{\columnwidth}{@{}>{\raggedright\arraybackslash}p{0.26\columnwidth}@{\extracolsep{\fill}}cc>{\raggedright\arraybackslash}p{0.28\columnwidth}@{}}
\toprule
\textbf{Domain} & $\Delta$\textbf{Acc.\ (pp)} & $\Delta$\textbf{Agr.\ (pp)} & \textbf{Acc.\ vs.\ Agr.} \\
\midrule
Medical   & $-$8.2 & $-$8.2 & matched \\
Financial & $-$5.2 & $-$2.7 & acc.\ drop larger \\
Legal     & $-$6.3 & $-$6.2 & matched \\
\bottomrule
\end{tabular*}
\end{table}

\section{Theoretical Analysis}
\label{app:theory}

We provide formal guarantees for the two objectives in
Section~\ref{sec:formulation}: benign-user harmlessness
(Eq.~\eqref{eq:harmless}) and attacker budget waste
(Eq.~\eqref{eq:honeypot_obj}, Theorem~\ref{thm:budget} in the main text).

\subsection{User Harmlessness}

\begin{assumption}[User-HKG Separation]
\label{assump:separation}
Let $\mathcal{P}_\mathrm{user}$ denote the benign query distribution
and $\mathcal{V}_H$ the set of HKG concepts.
For all $v \in \mathcal{V}_H$:
\begin{equation}
    \mathbb{E}_{x \sim \mathcal{P}_\mathrm{user}}
    \!\left[\mathrm{sim}(x, v)\right]
    \leq \mu_0
    \label{eq:separation}
\end{equation}
where $\mathrm{sim}(\cdot,\cdot)$ is cosine similarity in the
sentence embedding space and $\mu_0 < \tau$.
\end{assumption}

\begin{theorem}[User Harmlessness]
\label{thm:harmless}
Under Assumption~\ref{assump:separation}, the probability that a
benign session triggers the luring mechanism satisfies:
\begin{equation}
    \Pr_{x \sim \mathcal{P}_\mathrm{user}}
    \!\left[\pi(x)=\mathrm{lure}\right]
    \;\leq\;
    \exp\!\left(
        -\frac{2W(\tau - \mu_0)^2}{\lambda^2}
    \right)
    \label{eq:harmless_bound}
\end{equation}
where $W$ is the sliding window size and $\lambda \in (0,1]$.
\end{theorem}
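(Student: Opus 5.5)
The plan is to reduce the luring event to a large-deviation event for an empirical average over the sliding window and then apply Hoeffding's inequality. First I will make the decision rule of $\pi$ explicit in a form compatible with Assumption~\ref{assump:separation}. For a benign session, let $Z_1,\dots,Z_W$ be the per-query statistics aggregated over the most recent $W$ queries. I will take these to be the similarity-based scores $Z_i=\mathrm{sim}(x_i,v)$ for a fixed $v\in\mathcal{V}_H$, or equivalently the per-query contribution to the suspicion score, rescaled into the same units. Luring fires only if the window average $\bar Z_W=\frac1W\sum_{i=1}^W Z_i$ exceeds $\tau$. Under this reading, $\{\pi(x)=\mathrm{lure}\}\subseteq\{\bar Z_W\ge\tau\}$.

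Second, I will fix the two ingredients Hoeffding requires. The first is boundedness: each $Z_i$ lies in an interval of length at most $\lambda\in(0,1]$. This is the role of $\lambda$. Cosine similarities between embeddings of same-domain text are concentrated in a sub-interval of $[-1,1]$, so I will posit $Z_i\in[a,a+\lambda]$, and note that the normalised suspicion score in Eq.~\eqref{eq:suspicion} is similarly bounded. The second is the mean bound: by Eq.~\eqref{eq:separation}, $\mathbb{E}[Z_i]\le\mu_0$ for every $i$ and every $v$. Hence $\mathbb{E}[\bar Z_W]\le\mu_0<\tau$, so the deviation $t=\tau-\mu_0$ is positive. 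If the rule is a maximum over $v\in\mathcal{V}_H$ rather than a fixed $v$, I will apply the argument to the nearest HKG node that the benign query would have to match. The alternative is a union bound, which would add a $|\mathcal{V}_H|$ factor; I will avoid it by noting that $\pi$ compares against a single aggregated score.

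Third, assuming the benign queries in a window are drawn independently from $\mathcal{P}_{\mathrm{user}}$, Hoeffding's inequality gives
\[
\Pr\!\left[\bar Z_W-\mathbb{E}\bar Z_W\ge \tau-\mu_0\right]\le \exp\!\left(-\frac{2W^2(\tau-\mu_0)^2}{\sum_{i=1}^W\lambda^2}\right)=\exp\!\left(-\frac{2W(\tau-\mu_0)^2}{\lambda^2}\right).
\]
Because $\mathbb{E}\bar Z_W\le\mu_0$, the event $\{\bar Z_W\ge\tau\}$ is contained in the event on the left, and combining this with the first step yields Eq.~\eqref{eq:harmless_bound}. If the queries are not independent (sessions are often topically correlated), I will replace Hoeffding with the Azuma--Hoeffding inequality for the martingale differences $Z_i-\mathbb{E}[Z_i\mid\mathcal{F}_{i-1}]$. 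This works provided the conditional means also satisfy the $\mu_0$ bound, which is a natural strengthening of Assumption~\ref{assump:separation}, and it gives the same exponent.

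The main obstacle is the first step: rigorously linking $\pi$, which the paper defines through the additive suspicion score of Eq.~\eqref{eq:suspicion} with a deterministic temporal term, to a window average of bounded per-query variables whose mean is controlled by the embedding-similarity assumption. My first attempt will be to absorb the deterministic terms ($\alpha$ and the capped temporal term) into the effective threshold, and to treat the remaining stochastic part as the bounded variable $Z_i$ with range $\lambda$. I will then state explicitly that $\mu_0$ bounds its mean under $\mathcal{P}_{\mathrm{user}}$, and that this is the content Assumption~\ref{assump:separation} is meant to supply.
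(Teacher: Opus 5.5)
Your proposal is essentially the paper's proof: Hoeffding's inequality over a $W$-query window whose terms have range at most $\lambda$ and mean at most $\mu_0$ (supplied by Assumption~\ref{assump:separation}), with deviation $\tau-\mu_0$, and you are if anything more careful than the paper, since you use the containment $\{\bar Z_W\ge\tau\}\subseteq\{\bar Z_W-\mathbb{E}\bar Z_W\ge\tau-\mu_0\}$ rather than the paper's equality and you state the independence (or Azuma) requirement explicitly. The link you flag as the main obstacle is equally unproved in the paper, which simply asserts $\mathbb{E}[s_t]\le\mu_0$; note that keeping $\tau-\mu_0$ in the exponent requires $\mu_0$ to bound the mean of the full score including $\alpha$ and the temporal term, because absorbing those offsets into a lowered threshold while bounding only the stochastic remainder by $\mu_0$ would shrink the deviation to $\tau-\mu_0$ minus the offsets.
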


\begin{proof}
Under Assumption~\ref{assump:separation}, the expected suspicion
score $\mathbb{E}[s_t]$ for a benign query satisfies
$\mathbb{E}[s_t] \leq \mu_0 < \tau$.
The suspicion score is a convex combination of bounded random
variables in $[0,1]$.
Applying Hoeffding's inequality over the $W$-query window:
\begin{align}
    \Pr\!\left[s_t > \tau\right]
    &= \Pr\!\left[s_t - \mathbb{E}[s_t] > \tau - \mu_0\right] \\
    &\leq \exp\!\left(
        -\frac{2W^2(\tau-\mu_0)^2}{\sum_{i=1}^W (b_i - a_i)^2}
    \right)
\end{align}
where each term is bounded in an interval of length at most
$\lambda$, giving Eq.~\eqref{eq:harmless_bound}.
As shown in Appendix~\ref{app:session_trigger}, this bound is consistent with both the aggregate simulated false-trigger rate and the heavy-user profile analysis: the theoretical bound $\Pr[\pi(x)=\mathrm{lure}] \leq e^{-4.5} < 0.012$ at $\tau=0.6$ upper-bounds the empirical 0.0\% observed across both settings.
\end{proof}

\subsection{Attacker Budget Waste}

This subsection proves Theorem~\ref{thm:budget}, stated in the main text under Assumption~\ref{assump:ppl}.

\begin{proof}[Proof of Theorem~\ref{thm:budget}]
At activation step $t_0$, the breadcrumb introduces $k$ HKG nodes
into the frontier $\partial\mathcal{G}_a^{(t_0)}$.
Under Assumption~\ref{assump:ppl}, at each subsequent step $t > t_0$,
the probability of selecting an HKG node is at least:
\begin{equation}
    p_t \;\geq\; \frac{k}{|\partial\mathcal{G}_a^{(t_0)}| + k}
        \;=:\; p
\end{equation}
since each HKG response adds $d_{\min} = 3 \geq k$ new HKG nodes
to the frontier, so $|\partial\mathcal{G}_a^{(t)} \cap \mathcal{V}_H|$
is non-decreasing.
The bound follows from $1 - (1-p)^{B-t_0}$.
\end{proof}

\paragraph{Numerical illustration.}
With $k=3$, $|\partial\mathcal{G}_a^{(t_0)}|=10$, $B-t_0=20$:
$1 - (10/13)^{20} \approx 0.995$.
This is a worst-case \emph{lower} bound, not a tight estimate, so it is expected to exceed rather than match the empirical 90.5\% observed in Table~\ref{tab:hkg_size_and_budget}: the bound assumes the frontier grows by exactly the minimum $d_{\min}=3$ new nodes per visit and ignores any residual attention the attacker's pipeline pays to already-explored real-domain candidates. The gap between the $99.5\%$ bound and the $90.5\%$ empirical figure is consistent with this conservatism, and both figures are computed only for the response-adaptive attacker family (§\ref{sec:rq6}), for which Assumption~\ref{assump:ppl} is expected to hold most closely.

\subsection{Asymmetric Cost Analysis}

\begin{proposition}[Defender Cost Asymmetry]
\label{prop:asymmetry}
The ratio of attacker wasted budget to defender construction cost
satisfies $\Omega(B)$, growing linearly in the attacker's budget.
\end{proposition}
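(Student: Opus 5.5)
We prove Proposition~\ref{prop:asymmetry} by bounding the numerator and the denominator of the ratio separately. The numerator is handled by Theorem~\ref{thm:budget}; the denominator is shown to be independent of $B$.

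\textbf{Defender cost.} We first argue that the construction cost is a one-time quantity $C_D$ that does not depend on $B$. By the construction pipeline of \S\ref{sec:hkg} and the commitment constraint in Table~\ref{tab:defender_model}, $(\mathcal{H},\pi)$ is fixed at Setup. The cost therefore consists of four parts:
\begin{itemize}[leftmargin=*, topsep=2pt, itemsep=0pt]
\item candidate sourcing, which costs $O(|\mathcal{V}_H|/r)$ LLM calls, where $r$ is the filter pass rate;
\item keyword filtering;
\item probe-model filtering per Eq.~\eqref{eq:filter};
\item wiring $d_{\min}|\mathcal{V}_H|$ edges.
\end{itemize}
Each part is a function of $|\mathcal{V}_H|$ and $d_{\min}$ only. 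The per-query overhead of Algorithm~\ref{alg:defense} is $O(1)$ (\S\ref{sec:pipeline}), and it is dominated by the victim inference $f_v(x_t)$ that the defender pays regardless. We therefore charge the defender only the marginal cost, which is $C_D=\Theta(|\mathcal{V}_H|)$ and constant in $B$.

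\textbf{Attacker waste.} Next we lower-bound the expected wasted budget
\[
W_A=\mathbb{E}\!\left[\sum_{t=t_0+1}^{B}\mathbf{1}[x_t\in\mathcal{V}_H]\right].
\]
Theorem~\ref{thm:budget} gives
\[
W_A\ge (B-t_0)\left(1-(1-p)^{B-t_0}\right),\qquad p=\frac{k}{|\partial\mathcal{G}_a^{(t_0)}|+k}.
\]
Once $B-t_0\ge 1/p$, the bracketed factor is at least $1-e^{-1}$. Since $t_0$ is a detection step independent of $B$ (empirically $t_0=5$, Table~\ref{tab:sensitivity}), for $B\ge 2t_0$ we get
\[
W_A\ge \frac{1-e^{-1}}{2}\,B=\Omega(B).
\]
Dividing by $C_D$ gives the claimed ratio $W_A/C_D=\Omega(B)$.

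\textbf{Main obstacle.} The difficulty is that Theorem~\ref{thm:budget} implicitly requires the HKG not to be exhausted. Each visited node is removed from the frontier, so at most $|\mathcal{V}_H|$ queries can land in $\mathcal{H}$. This forces $W_A\le|\mathcal{V}_H|$, which matches the narrowing gap seen at $B=2000$ in the budget-scaling experiment.

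My first attempt is to state the result in the regime where HKG size scales with the threat budget, $|\mathcal{V}_H|\ge c\,B$.
\begin{itemize}[leftmargin=*, topsep=2pt, itemsep=0pt]
\item If $C_D$ is held fixed in that regime, the ratio is linear in $B$ as claimed.
\item If instead $C_D$ grows with $B$, the ratio degenerates. In that case the honest reading is a per-node asymmetry: each honeypot node is built once at cost $O(1)$ and absorbs in expectation at least one attacker query. This holds per attacker session, so summing over $m$ independent attackers, or over repeated extraction campaigns against the same fixed $\mathcal{H}$, yields total waste $\Omega(mB)$ against a one-time $C_D$.
\end{itemize}
The clean way out is to make the proposition precise as follows: for fixed $|\mathcal{V}_H|$ and all $B\le|\mathcal{V}_H|/c$, the ratio is at least $c'B/C_D$. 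I expect that pinning down this regime, rather than the algebra itself, is where the argument needs the most care.
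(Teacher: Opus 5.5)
Your core argument is the paper's justification. The construction cost is a one-time quantity independent of $B$, and Theorem~\ref{thm:budget} gives expected waste $(B-t_0)\bigl[1-(1-p)^{B-t_0}\bigr]$, which is $\Omega(B)$. You sharpen it slightly by exhibiting the constant $\tfrac{1-e^{-1}}{2}$ once $B\ge\max(2t_0,\,t_0+1/p)$; the paper only says the waste ``approaches $B-t_0$ as $B$ grows.''

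\textbf{The exhaustion caveat.} This is where you go beyond the paper, and it is a real correction rather than excess caution. The paper's proof lets $B\to\infty$ with $\mathcal{H}$ and its cost held fixed. But the frontier consists of not-yet-queried nodes, so at most $|\mathcal{V}_H|$ post-activation queries can land in $\mathcal{H}$. Consequences:
\begin{itemize}
\item Theorem~\ref{thm:budget} itself must fail once $B-t_0$ is large relative to $|\mathcal{V}_H|$. Its right-hand side tends to $1$, while the true fraction is at most $|\mathcal{V}_H|/(B-t_0)$. The step in its proof that breaks is the claim that $|\partial\mathcal{G}_a^{(t)}\cap\mathcal{V}_H|$ is non-decreasing, which fails once out-neighbours have already been visited.
\item The unqualified ``$\Omega(B)$, growing linearly'' claim therefore cannot hold as $B\to\infty$. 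This is consistent with the narrowing gap the paper reports at $B=2000$.
\item Your own two facts, $C_D=\Theta(|\mathcal{V}_H|)$ and $W_A\le|\mathcal{V}_H|$, make the single-campaign ratio $O(1)$ uniformly. So your restricted version is a linear lower bound only on a bounded window of $B$.
\item The only reading with genuinely unbounded growth is your amortization over $m$ campaigns against a fixed $\mathcal{H}$. That reading needs the \emph{visited} set in Algorithm~\ref{alg:defense} to be tracked per session.
\end{itemize}

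\textbf{Two smaller points.} First, Eq.~\eqref{eq:suspicion} normalizes the temporal term by $B$, so $t_0$ need not be independent of $B$. What you actually need is $t_0\le cB$ for some constant $c<1$, which still gives $B-t_0\ge(1-c)B$. The paper makes the same assumption implicitly when it passes from $\Omega(B-t_0)$ to $\Omega(B)$. Second, the $O(1)$ per-query overhead is not made negligible by being dominated by $f_v$. Summed over $B$ queries it is $\Theta(B)$, and it would bound the ratio if it were counted. It is excluded simply because the denominator is, by definition, the construction cost.
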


\begin{proof}[Justification]
HKG construction (Appendix~\ref{app:cost}) is a one-time offline cost independent of $B$: candidate generation, filtering, and graph assembly are performed once per domain regardless of how many queries any future attacker issues. By Theorem~\ref{thm:budget}, the expected number of attacker queries wasted on the HKG after activation grows as $(B-t_0)\cdot\left[1-(1-p)^{B-t_0}\right]$, which is $\Omega(B-t_0)$ for fixed $p>0$ and approaches $B-t_0$ as $B$ grows. The ratio of wasted attacker queries to the (constant) construction cost is therefore $\Omega(B)$.
\end{proof}

\section{Judge Model Ablation}
\label{app:judge}

Table~\ref{tab:judge} reports Agreement under four judge models.
KT~(Full) reduces Agreement by 8--9\% relative to No Defense
regardless of which model performs the evaluation.

\begin{table}[h]
\centering
\small
\caption{Agreement (\%) on MedQA under different judge models.
Medical domain, $B=500$. Mean $\pm$ std over 5 seeds.}
\label{tab:judge}
\setlength{\tabcolsep}{2pt}
\begin{tabular*}{\columnwidth}{@{}>{\raggedright\arraybackslash}p{0.45\columnwidth}@{\extracolsep{\fill}}cc@{}}
\toprule
\textbf{Judge model} & \textbf{No Defense} & \textbf{KT (Full)} \\
\midrule
\texttt{gpt-4o-mini} (default) & 62.3{\tiny$\pm$1.1} & 54.1{\tiny$\pm$0.9} \\
\texttt{gpt-4o}                & 63.2{\tiny$\pm$0.9} & 54.9{\tiny$\pm$1.0} \\
\texttt{claude-haiku-4-5}      & 61.7{\tiny$\pm$1.2} & 53.4{\tiny$\pm$1.1} \\
\texttt{llama-3.1-70b}         & 61.0{\tiny$\pm$1.3} & 52.8{\tiny$\pm$1.2} \\
\bottomrule
\end{tabular*}
\end{table}

\section{Probe Model and Transferability Filter}
\label{app:probe}

\subsection{Probe Model Details}

The probe model $f_{\mathrm{probe}}$ is LLaMA-3.1-8B-Instruct,
identical in architecture to the surrogate $f_a$.
For each candidate item $h$, we construct 50 QA pairs and fine-tune
for $k=3$ epochs (lr $2\times10^{-5}$, batch size 8, linear warmup).
The filtering pipeline processes one candidate on a single A100 GPU.

\subsection{Threshold Ablation}
\label{app:delta}

\begin{table}[h]
\centering
\small
\caption{Effect of transferability filter threshold $\delta$ on
HKG node count and KT~(Full) Agreement (\%) on MedQA, $B=500$.
Mean $\pm$ std over 5 seeds.}
\label{tab:delta}
\setlength{\tabcolsep}{2pt}
\begin{tabular*}{\columnwidth}{@{}c@{\extracolsep{\fill}}ccc@{}}
\toprule
$\delta$ & \textbf{Nodes passed} & \textbf{Utility (mean)}
         & \textbf{Agreement (\%)} \\
\midrule
0.005 & 1{,}182 & 0.000 & 56.3{\tiny$\pm$1.2} \\
0.010 & 1{,}344 & 0.000 & 55.1{\tiny$\pm$1.1} \\
0.020 & 1{,}500 & 0.000 & \textbf{54.1}{\tiny$\pm$0.9} \\
0.050 & 1{,}500 & 0.021 & 56.7{\tiny$\pm$1.3} \\
0.100 & 1{,}500 & 0.048 & 58.2{\tiny$\pm$1.4} \\
\bottomrule
\end{tabular*}
\end{table}

\section{HKG Construction Cost}
\label{app:cost}

\begin{table}[h]
\centering
\small
\caption{HKG construction cost per domain, corresponding to the four-stage pipeline (candidate sourcing, keyword filtering, probe filtering, graph construction) described in §\ref{sec:hkg}. Category distribution reflects the composition of the final retained node set for Medical and Financial (cf.\ Table~\ref{tab:hkg_stats} for aggregate node/edge/utility statistics; Legal category distribution is reported separately in Appendix~\ref{app:legal_hkg}).}
\label{tab:cost}
\setlength{\tabcolsep}{2pt}
\begin{tabular*}{\columnwidth}{@{}>{\raggedright\arraybackslash}p{0.42\columnwidth}@{\extracolsep{\fill}}ccc@{}}
\toprule
& \textbf{Medical} & \textbf{Financial} & \textbf{Legal} \\
\midrule
Candidates generated    & 4{,}230 & 4{,}415 & 4{,}680 \\
Keyword filter pass     & 78.3\% & 75.1\% & 72.6\% \\
Probe filter pass       & 45.3\% & 44.8\% & 43.1\% \\
Final nodes             & 1{,}500 & 1{,}500 & 1{,}500 \\
Est.\ API cost (USD)    & \$4.52 & \$4.80 & \$5.20 \\
Wall-clock time (hrs)   & 2.8    & 3.1    & 3.4    \\
\midrule
\multicolumn{4}{@{}p{\columnwidth}@{}}{\textit{Category distribution (Medical / Financial)}} \\
~~Historical minutiae   & 528 (35.2\%) & 534 (35.6\%) & -- \\
~~Peripheral taxonomy   & 453 (30.2\%) & 525 (35.0\%) & -- \\
~~Niche procedural      & 519 (34.6\%) & 441 (29.4\%) & -- \\
\bottomrule
\end{tabular*}
\end{table}

\section{Breadcrumb Template Diversity}
\label{app:breadcrumb_diversity}

\paragraph{Diversity statistics.}
Self-BLEU-4: 0.31. Average template length: $24.6 \pm 3.2$ tokens.
Mean pairwise cosine similarity: 0.52.

\paragraph{Template examples} (3 of 20 shown):
\begin{enumerate}[leftmargin=*,nosep]
\item \textit{``This mechanism shares similarities with \textbf{[C]}, a [domain-adjacent framework] that offers a complementary perspective.''}
\item \textit{``A related line of inquiry involves \textbf{[C]}, which provides additional context on the underlying dynamics.''}
\item \textit{``Practitioners have also noted connections to \textbf{[C]}, a framework that informed earlier approaches to this problem.''}
\end{enumerate}

\section{Legal Domain HKG Details}
\label{app:legal_hkg}

\begin{table}[h]
\centering
\small
\caption{Legal HKG statistics.}
\label{tab:legal_hkg}
\setlength{\tabcolsep}{2pt}
\begin{tabular*}{\columnwidth}{@{}>{\raggedright\arraybackslash}p{0.58\columnwidth}@{\extracolsep{\fill}}c@{}}
\toprule
\textbf{Property} & \textbf{Legal} \\
\midrule
Nodes $|\mathcal{V}_H|$     & 1{,}500 \\
Edges $|\mathcal{E}_H|$     & 4{,}500 \\
Utility (mean / max)        & 0.000 / 0.000 \\
Attractiveness (mean / min) & 0.947 / 0.530 \\
Inaccessibility (mean / min)& 0.931 / 0.695 \\
\midrule
Historical minutiae  & 512 (34.1\%) \\
Peripheral taxonomy  & 478 (31.9\%) \\
Niche procedural     & 510 (34.0\%) \\
\bottomrule
\end{tabular*}
\end{table}

\section{Related Work}
\label{app:related}

\subsection{Model Extraction Attacks}

Model extraction attacks (MEAs) were formalized by
Tram\`{e}r et al.\ \cite{tramer2016stealing}, who demonstrated
that models exposed through prediction APIs can be reconstructed
through systematic querying.
Subsequent work extended MEAs to deep neural networks in computer
vision, using surrogate training on API outputs
\cite{papernot2017practical,correia2018copycat,orekondy2019knockoff}.
Krishna et al.\ \cite{krishna2019thieves} showed that BERT-based
APIs can be extracted with high fidelity by fine-tuning on
victim-generated labels, while Xu et al.\ \cite{xu2022student}
demonstrated that the surrogate can even surpass the victim on
certain benchmarks through careful query selection.
Recent studies have targeted production-scale LLMs:
Birch et al.\ \cite{birch2023model} introduced model leeching
attacks against instruction-following LLMs, and
Carlini et al.\ \cite{carlini2024stealing} showed that partial
extraction of production language models is feasible with moderate
API access.
Liang et al.\ \cite{liang2024alignment} further demonstrated
alignment-aware extraction that preserves both capability and
safety properties of the victim model.
Extraction has also been studied beyond text: Zhao et al.\
\cite{zhao2026graphip} introduce GraphIP-Bench, a benchmark
quantifying how easily graph neural networks can be stolen and
which defenses withstand such attacks, indicating that the
knowledge-space perspective we adopt for LLMs has an analogue in
the graph domain.
Several recent surveys consolidate the growing MEA literature across
modalities and deployment settings \cite{zhao2025survey,zhao2025systematic,zhao2025distsurvey}.
A central challenge in MEAs is improving query efficiency under
limited budgets.
Active-learning methods prioritize informative queries to maximize
extraction fidelity per API call.
ActiveThief \cite{pal2020activethief} uses uncertainty sampling on
unlabeled public data; Jagielski et al.\ \cite{jagielski2020high}
combine semi-supervised learning with strategic querying to achieve
high-accuracy extraction;
Chandrasekaran et al.\ \cite{chandrasekaran2020exploring} formally
connect active learning and model extraction through a shared
information-theoretic framework;
and Li et al.\ \cite{li2026query} introduce query-efficient domain
knowledge stealing against LLMs via PPL-based query selection
with entity expansion.
In contrast to prior work that treats query efficiency as an attack
objective, our work treats it as the attacker's primary
vulnerability.

\subsection{Defenses Against Model Extraction}

\paragraph{Output perturbation.}
These defenses degrade surrogate training by perturbing API outputs
or increasing extraction cost.
Orekondy et al.\ \cite{orekondy2019prediction} proposed prediction
poisoning, which maximizes the angular deviation between the true
and perturbed output gradients.
Kariyappa and Qureshi \cite{kariyappa2020defending} introduced
adaptive misinformation, which selectively perturbs outputs for
queries near the decision boundary.
Dziedzic et al.\ \cite{dziedzic2022increasing} proposed calibrated
proof-of-work schemes that increase the computational cost of
querying without altering outputs.
Cheng et al.\ \cite{cheng2025misleader} instead defend with
ensembles of distilled models that respond inconsistently under
suspected extraction while remaining accurate for benign queries.
All these approaches face an accuracy--security trade-off:
stronger perturbation harms benign utility \cite{zhao2022drw},
making them difficult to deploy in quality-sensitive settings.

\paragraph{Watermarking and ownership verification.}
Watermarking-based defenses embed identifiable signals into model
outputs for post-hoc ownership verification.
Jia et al.\ \cite{jia2021entangled} proposed entangled watermarks
that are preserved through distillation;
Zhao et al.\ \cite{zhao2022drw} introduced distillation-resistant
watermarking for NLP models;
He et al.\ \cite{he2022cater} developed conditional watermarks for
text generation APIs;
Szyller et al.\ \cite{szyller2021dawn} proposed dynamic adversarial
watermarking that adapts to the extraction process;
Peng et al.\ \cite{peng2023you} embedded backdoor-based watermarks
for EaaS copyright protection;
Zhao et al.\ \cite{zhao2024nsmark} introduced null-space
watermarking as a black-box defense;
and Shen et al.\ \cite{shen2026credit} propose CREDIT, a certified
ownership-verification scheme that provides provable guarantees
against extraction.
Li et al.\ \cite{li2025intellectual} extend this  to graph-based machine-learning-as-a-service, surveying attacks and
defenses for intellectual-property protection.
While effective for attribution and ownership claims, watermarking
does not interfere with the extraction process itself: the surrogate
model has already been trained by the watermark is detected.

\paragraph{Query detection.}
Detection-based defenses identify suspicious queries through
statistical or representation-level analysis.
PRADA \cite{juuti2019prada} detects extraction attacks by monitoring
the distribution of incoming queries for anomalous patterns.
SEAT \cite{zhang2021seat} uses adversarial training to learn a
similarity encoder that distinguishes extraction queries from benign
ones.
MISLEADER \cite{cheng2025misleader} combines detection with active
defense by deploying an ensemble of distilled models.
However, modern MEAs increasingly use task-relevant, in-distribution
queries that resemble normal usage
\cite{li2026query,liang2024alignment}, making reliable detection difficult as extraction techniques mature, and as our
own session-splitting evaluation (Appendix~\ref{app:session_split})
shows, even purpose-built session-level detectors remain vulnerable
to budget fragmentation across sessions.

\subsection{Honeypot-Based Defenses in Machine Learning}

Honeypots redirect attacker effort toward decoy targets rather than
blocking attacks directly, a strategy with deep roots in network
security \cite{spitzner2002honeypots,pawlick2019game}.
Recent work has explored LLM-powered honeypot systems for network
defense:
Otal and Canbaz \cite{otal2024llm} used LLMs to generate realistic
honeypot interactions;
Wang et al.\ \cite{wang2024honeygpt} proposed HoneyGPT for terminal
honeypots;
Sezgin and Boyac{\i} \cite{sezgin2025decoypot} developed DecoyPot
for web API emulation;
and Vasilatos et al.\ \cite{vasilatos2024llmpot} introduced LLMPot
for industrial control system honeypots.
In machine learning, prior honeypot-style defenses operate at the
output or parameter level.
Shan et al.\ \cite{shan2020gotta} embed adversarial triggers that
cause misclassification when the extracted model is deployed.
Le et al.\ \cite{le2021sweet} plant trapdoor patterns that can be
detected in the surrogate to prove extraction.
HoneypotNet \cite{wang2025honeypotnet} injects backdoor layers that
activate only in extracted copies, enabling post-hoc detection.
All these approaches focus on detecting or attributing extraction
\emph{after} it succeeds; none intervenes in the knowledge
acquisition process during extraction.
Our work fills this gap by redirecting the attacker's knowledge
exploration trajectory in real time.

\subsection{Knowledge Graphs for Language Model Augmentation}

Knowledge graphs structure entities and relations to support retrieval and reasoning over factual knowledge \cite{ji2022survey}.
Prior work used KGs to augment language models through multiple ways.

\paragraph{Direct integration.}
K-BERT \cite{liu2020k} injects knowledge graph triples into the
input layer of BERT, while CoLAKE \cite{sun2020colake} jointly
learns contextualized language and knowledge embeddings.
These approaches improve performance on knowledge-intensive tasks
but require modifying the model architecture.

\paragraph{Retrieval-augmented generation.}
RAG \cite{lewis2020retrieval} retrieves relevant passages from a
knowledge source at inference time, avoiding the need to store all
knowledge in model parameters.
QA-GNN \cite{yasunaga2021qa} combines language models with graph
neural networks over knowledge graphs for question answering.
GreaseLM \cite{zhang2022greaselm} fuses language model
representations with KG reasoning through cross-modal attention
layers.

\paragraph{Graph-guided reasoning.}
Think-on-Graph \cite{sun2023think} showed that step-by-step KG
traversal enables deep multi-hop reasoning in LLMs.
Sen et al.\ \cite{sen2023knowledge} demonstrated that KG
augmentation improves complex question answering by providing
structured evidence chains.

Our work repurposes the KG not as a source of useful knowledge but
as an organized space of non-transferable knowledge.
The internal coherence and connectivity of the graph structure, the
same properties that make KGs effective for augmenting LLMs, are
what make the HKG attractive to active-learning attackers, turning
a tool for knowledge enhancement into knowledge-space defense.

\end{document}